\newcommand{\bigzono}[1]{\Big\langle #1 \Big\rangle}
\newcommand{\zono}[1]{\langle #1 \rangle}
\newtheorem{proposition}{Proposition}
\let\NAT@parse\undefined
\newcommand{\Rn}{\R^n}
\newcommand{\operator}[1]{{\normalfont \texttt{#1}}}
\newcommand{\CE}{R}
\newcommand{\Comp}{\textit{Complexity: }}
\DeclareMathSymbol{\shortminus}{\mathbin}{AMSa}{"39}
\newcommand{\id}{\mathsf{id}}
\def\tr{\text{tr}}
\title{\LARGE\bf Data-Driven Nonconvex Reachability Analysis \\ Using Exact Multiplication}
\author{Zhen Zhang$^1$, M. Umar B. Niazi$^2$, Michelle S. Chong$^3$, Karl H. Johansson$^2$, and Amr Alanwar$^1$
\thanks{$^1$ School of Computation, Information and Technology, Technical University of Munich, Germany. (Email: $\{$zhenzhang.zhang, alanwar$\}$@tum.de)}
\thanks{$^2$ Division of Decision and Control Systems, Digital Futures, School of Electrical Engineering and Computer Science, KTH Royal Institute of Technology, Stockholm, Sweden  (Email: $\{$mubniazi, kallej$\}$@kth.se)}
\thanks{$^3$ Department of Mechanical Engineering, Eindhoven University of Technology, the Netherlands (Email: m.s.t.chong@tue.nl)}
\thanks{M. U. B. Niazi is supported by the European Union’s Horizon Research and Innovation Programme under Marie Skłodowska-Curie grant agreement No. 101062523.}
}
\begin{document}
\maketitle
\thispagestyle{empty}
\pagestyle{empty}

\begin{abstract}

This paper addresses a fundamental challenge in data-driven reachability analysis: accurately representing and propagating non-convex reachable sets. We propose a novel approach using constrained polynomial zonotopes to describe reachable sets for unknown LTI systems. Unlike constrained zonotopes commonly used in existing literature, constrained polynomial zonotopes are closed under multiplication with constrained matrix zonotopes. We leverage this property to develop an exact multiplication method that preserves the non-convex geometry of reachable sets without resorting to approximations. We demonstrate that our approach provides tighter over-approximations of reachable sets for LTI systems compared to conventional methods.
\end{abstract}

\section{Introduction}

Reachability analysis provides the set of all possible states a system can transition to from a given initial state set, considering uncertainties within defined bounds or according to specified probabilistic models. 
It plays an essential role in providing formal guarantees required for safety-critical applications such as autonomous vehicles, robotics, and aerospace systems. 
Traditional model-based approaches for reachability require accurate mathematical models, which are often difficult to obtain for complex systems. 
Data-driven reachability analysis offers a promising alternative by leveraging measured data instead of relying on models. 

Data-driven approaches for reachability can be categorized into deterministic frameworks with bounded uncertainties and stochastic frameworks with probabilistic uncertainty models. 
Within the deterministic framework, several set representations have been employed to obtain overapproximated reachable sets. 
For instance, \cite{alanwar2021data} employs matrix zonotopes for representing the set of possible models consistent with noisy data, which is then used to approximate the reachable set as a constrained zonotope, a set representation introduced by \cite{scott2016constrained}. 
To incorporate prior knowledge and reduce conservatism, \cite{Alanwar2023Datadriven} employs constrained matrix zonotopes and extends the method to polynomial systems and Lipschitz nonlinear systems. 
Similarly, \cite{park2024data} uses linear time-varying approximations with bounded error terms for nonlinear systems and proposes an ellipsoidal over-approximation of the reachable set by solving a convex optimization problem. For hybrid or logical systems, \cite{siefert2022robust} and \cite{alanwar2025polynomial} introduce hybrid zonotopes and logical polynomial zonotopes, respectively, for representing reachable sets.


Within the stochastic framework, \cite{thorpe2019model, thorpe2021sreachtools} use kernel distribution embeddings in reproducing kernel Hilbert spaces for data-driven reachability, while \cite{devonport2023data} apply empirical inverse Christoffel functions with probably approximately correct (PAC) guarantees, and \cite{griffioen2023data} employ Gaussian process state space models for exact probability measures of trajectories.
For human-in-the-loop systems, \cite{choi2023data} and \cite{govindarajan2017data} use Gaussian mixture models and optimized disturbance bounds, respectively.
Recent advances also leverage deep learning and statistical methods, e.g., \cite{sivaramakrishnan2024stochastic} approximate stochastic reachability via neural networks, and \cite{hashemi2024statistical} address distribution shift in stochastic cyber-physical systems.

Deterministic approaches to data-driven reachability are preferred when constraint violations are unacceptable, e.g., collision avoidance. 
This is why regulatory frameworks for safety-critical systems (e.g., DO-178C \cite{rierson2017developing}: Software Considerations in Airborne Systems and Equipment Certification) often require deterministic guarantees rather than probabilistic ones. 
Furthermore, when data is limited or uncertainty distributions are unknown, deterministic approaches avoid potentially incorrect distributional assumptions, leading to robust guarantees even against worst-case scenarios.


Handling non-convex sets in deterministic reachability is challenging, e.g., obstacle-induced initial sets.
While convex sets have a computational advantage, they sacrifice accuracy by overapproximating the reachable set \cite{Alanwar2023Datadriven}. 
Recent works \cite{dietrich2024nonconvex, wang2023data, djeumou2022fly} have addressed non-convex reachable sets to some degree.
However, accurately representing and propagating non-convex sets without introducing excessive conservatism or computational complexity remains an open problem.
 
The challenge of non-convexity in data-driven reachability is particularly evident in linear systems where one obtains a reachable set by multiplying a set of system matrices consistent with data with a set of possible states. 
For instance, in \cite{Alanwar2023Datadriven}, reachable sets and a set of possible system matrices consistent with data are described using constrained zonotopes (CZs) and constrained matrix zonotopes (CMZs), respectively. 
However, CZs are not closed under multiplication with CMZs. This is the reason why convex set-based approaches to reachability typically resort to approximations that overbound the resulting set, leading to conservatism.

In this paper, we address this challenge in deterministic data-driven reachability by employing constrained polynomial zonotopes (CPZs) recently introduced by \cite{kochdumper2023constrained}. 
While \cite{luo2023reachability} presents an exact multiplication for matrix zonotopes (MZs) and polynomial zonotopes (PZs), it does not incorporate constraints into the multiplication itself. 
Therefore, our key contribution is to add the constraints of the sets in the exact multiplication operation by leveraging the closure property of CPZs when multiplied with CMZs. 
This extension preserves the non-convex geometry without introducing conservatism, thereby enabling more accurate safety verification for systems that can be more precisely characterized as CMZs---as opposed to just MZs---when dealing with non-convex reachable sets.


To summarize, our contributions are twofold. First, we provide a refinement method that updates the set of possible models as new data becomes available, enabling more accurate iterative reachability analysis. For this, we provide an exact method for intersecting two CMZs. Second, we introduce constrained polynomial matrix zonotopes (CPMZs) as a novel representation and provide an exact multiplication operation between a CPMZ and a CPZ. This operation is crucial for propagating non-convex sets of states accurately without introducing the conservatism typically associated with approximation techniques. This addresses the fundamental challenge in data-driven reachability analysis of maintaining the precise geometry of non-convex reachable sets.


\section{Preliminaries and Problem Formulation}\label{sec:preliminaries}

\subsection{Notations}
The sets of real and natural numbers are denoted by $\R$ and $\N$, respectively, with $\N_0 = \N\cup \{0\}$. 
We denote the matrix of zeros and ones of size $m \times n$ by $0_{m \times n}$ and $1_{m \times n}$, respectively, and the identity matrix by $I_n\in\R^{n\times n}$, where the subscripts are sometimes omitted to avoid clutter.
For a matrix $A$, $A^\top$ denotes the transpose and $A^\dagger$ the Moore–Penrose pseudoinverse.
By $A_{(i,j)}$, we denote the $(i,j)$-th entry and by $A_{(\cdot,j)}$, the $j$-th column. 
A slight abuse of notation is adopted for indexed matrices (e.g., $A_n$), where we use $A_n^{(i,j)}$ and $A_n^{(\cdot,j)}$.
For a vector or list $v$, $v_{(i)}$ denotes the $i$-th element, and $v_{(p_1:p_2)} \triangleq (v_{(p_1)}, \dots, v_{(p_2)})$ its restriction. 
The vectorization of a matrix $A \in \R^{n \times m}$ is denoted by $\mathrm{vec}(A) \in \R^{nm \times 1}$.
For a set $\mathcal{H} = \{1,2,\dots,{|\mathcal{H}|}\}$ with $|\mathcal{H}|$ denoting its cardinality, $A_{(\cdot,[1:|\mathcal{H}|])}$ denotes the matrix $[A_{(\cdot,1)},\dots, A_{(\cdot,|\mathcal{H}|)}]$.
The notation $[~]$ denotes an empty matrix or vector. The operators $\boxplus$ and $\otimes$ denote exact addition and exact multiplication, respectively.
\vspace{-1em}

\begin{table}[b]
\vspace{-1em}
    \centering
    \begin{tabular}{ll}
        \toprule
        CZ & Constrained Zonotope \\
        CPZ & Constrained Polynomial Zonotope \\
        MZ & Matrix Zonotope \\
        CMZ & Constrained Matrix Zonotope \\
        CPMZ & Constrained Polynomial Matrix Zonotope \\
        \bottomrule
    \end{tabular}
    \caption{List of acronyms}
    \label{tab:acronyms}
\end{table}

\subsection{Problem Statement}

We consider a linear time-invariant system in discrete-time $k\in\N_0$ with unknown system matrices $\Phi_\tr$ and $\Gamma_\tr$:
\vspace{-0.5em}
\begin{equation} \label{eq:model-linear}
    x_{(k)} = \Phi_\tr x_{(k-1)} + \Gamma_\tr u_{(k-1)} + w_{(k)},
\end{equation}
where $x_{(k)} \in \R^{n_x}$ is the state at time $k$, $u_{(k)} \in \R^{n_u}$ is the control input, and $w_{(k)} \in \R^{n_x}$ is the unknown noise.
Reachability analysis computes the set containing all possible states $x_{(k)}$ that can be reached from a compact set $\mathcal{X}_0$ containing initial states given a compact set $\mathcal{U}_k\ni u_{(k)}$ of feasible inputs and a compact set $\mathcal{Z}_w\ni w_{(k)}$ of possible noise.

\begin{definition}
The \textit{exact reachable set}
\begin{multline} 
    \label{eq:R}
    \mathcal{R}_{N} \triangleq \big\{ x_{(N)} \in \R^{n_x} \mid x_{(k+1)} = \Phi_\tr x_{(k)} + \Gamma_\tr u_{(k)} + w_{(k)}, \\
    x_{(0)} \in \mathcal{X}_0,
    u_{(k)} \in \mathcal{U}_k, w_{(k)} \in \mathcal{Z}_w, k = 0,\dots,N-1 \big\}
\end{multline}
is the set of all states that can be reached after $N$ time steps starting from initial set $\mathcal{X}_0$
subject to inputs $u_{(k)} \in \mathcal{U}_k$ and noise $w_{(k)} \in \mathcal{Z}_w$, for $k=0, \dots, N-1$.
\hfill $\lrcorner$
\end{definition}

Our goal is to estimate the exact reachable set of the system \eqref{eq:model-linear} using input-state data, without explicit knowledge of the system matrices $\Phi_\tr$ and $\Gamma_\tr$. 
Moreover, we are interested in the case where the initial set $\mathcal{X}_{0}$ is non-convex, and we must compute a reachable set that preserves this non-convex geometry without resorting to convex relaxations.

\subsection{Set Representations}
For our data-driven reachability analysis, we define the required set representations below. 

\begin{definition}
\label{def:conZonotope}
Given an offset $c\!\in\!\Rn$, generator $G\!\in\!\R^{n \times h}$, and constraints $A \in \R^{n_c \times h}$ and $b \in \R^{n_c}$, a \textit{constrained zonotope} (CZ) is a compact set in $\Rn$ defined as (see \cite{scott2016constrained})
\begin{equation}
    \label{eq:con-zon}
    \mathcal{C}{=}\bigg\{\!c\!+\! \sum_{k=1}^{h}\! \alpha_{(k)} G_{(\cdot,k)}\! \biggm| \! \sum_{k=1}^h \!\alpha_{(k)} A_{(\cdot,k)}\!=\!b,  \alpha_{(k)}\!\in\![\shortminus 1,\!1]\! \bigg \}.
\end{equation}
Furthermore, for reasons that will become apparent later, we associate an identifier $\id \in  \N^{1 \times h }$ with $\mathcal{C}$ for identifying the factors $\alpha_{(1)},\dots,\alpha_{(h)}$ in \eqref{eq:con-zon}.
To describe a CZ, we use a shorthand notation $\mathcal{C} = \zono{c,G,A,b,\id}_\text{CZ}$. 
\hfill $\lrcorner$
\end{definition}

Note that zonotopes are a special case of CZs, where constraints are empty \cite{kochdumper2023constrained}, i.e., $A=[~]$ and $B=[~]$.

\begin{definition}
Given an offset $c\!\in\!\Rn$, generator $G\!\in\!\R^{n \times h}$, exponent $E \in \N_0^{p \times h}$, and constraints $A \in \R^{n_c \times q}$, $b \in \R^{n_c}$, and $\CE \in \N_0^{p \times q}$, a \textit{constrained polynomial zonotope} (CPZ) is a compact set in $\Rn$ defined as (see \cite{kochdumper2023constrained})
\begin{multline}
    \label{eq:con-poly-zono}
    \mathcal{P} = \bigg \{ c + \sum_{i=1}^{h} \bigg( \prod_{k=1}^p \alpha_{(k)}^{E_{(k,i)}} \bigg) G_{(\cdot,i)} ~ \bigg | \\ 
    \sum_{i=1}^{q} \bigg( \prod_{k=1}^p \alpha_{(k)}^{\CE_{(k,i)}} \bigg) A_{(\cdot,i)} = b, \alpha_{(k)} \in [\shortminus 1,1]   \bigg \}.
\end{multline}

Similar to CZs, we associate an identifier $\id \in  \N^{1 \times p }$ with $\mathcal{P}$ for identifying the factors $\alpha_{(1)},\dots,\alpha_{(p)}$ in \eqref{eq:con-poly-zono}. 
We denote a CPZ as $\mathcal{P} = \zono{c,G,E,A,b,R,\id}_\text{CPZ}$.
\hfill $\lrcorner$
\end{definition}

\begin{example}
To understand the role of $\id$, consider
\begin{equation}
\label{eq:barp1}
    \mathcal{P}_1\!=\!\Biggl\langle\!    
    \begingroup
    \setlength\arraycolsep{1pt}
    \begin{bmatrix} 0 \\ 2 \\ 1 \end{bmatrix}\!,\!
    \begin{bmatrix} 0 & 1 \\ 3 & 2 \\ 1 & 5 \end{bmatrix}\!,\!
    \begin{bmatrix} 4 & 1 \\ 0 & 2 \end{bmatrix}\!,\!
    \begin{bmatrix} 1 & 2 \\ 0 & 0 \\ 3 & 4 \end{bmatrix}\!,\!
    \begin{bmatrix} 2 \\ 0 \\ 2 \end{bmatrix}\!,\!
    \begin{bmatrix} 4 & 2 \\ 0 & 2 \end{bmatrix}\!,\!
    \begin{bmatrix} 1 & 2 \end{bmatrix}\!
    \endgroup
    \Biggr\rangle_\text{CPZ}\!
\end{equation}
which describes the following CPZ
\vspace{-0.5em}
\begin{multline}
    \mathcal{P}_1 = \Bigg\{ 
    \begin{bmatrix} 0 \\ 2 \\ 1 \end{bmatrix} 
    +\begin{bmatrix} 0 \\ 3 \\ 1 \end{bmatrix} 
    \alpha_{(1)}^4 
    +\begin{bmatrix} 1 \\ 2 \\ 5 \end{bmatrix} 
    \alpha_{(1)} \alpha_{(2)}^2  \Biggm| \\ 
    \begin{bmatrix} 1 \\ 0 \\ 3 \end{bmatrix}
    \alpha_{(1)}^4
    \!+\!\begin{bmatrix} 2 \\ 0 \\ 4 \end{bmatrix}
    \alpha_{(1)}^2\alpha_{(2)}^2
    \!=\!\begin{bmatrix} 2 \\ 0 \\ 2 \end{bmatrix},
    \alpha_{(1)},\alpha_{(2)} \!\in\! [\shortminus 1, 1]\Bigg\}.
\end{multline}
Here, $\id = \begin{bmatrix} 1 & 2\end{bmatrix}$ identifies the dependent factor $\alpha_{(1)}$ with $\id_{(1)}=1$ and $\alpha_{(2)}$ with $\id_{(2)}=2$.
\hfill $\lrcorner$
\end{example}

\begin{definition} \label{def:conmatzonotopes}  
Given an offset $C \in \R^{m \times n}$, generators $G=\begin{bmatrix} G_{(1)}\!\dots\! G_{(\gamma)}\end{bmatrix} \in \R^{m \times (n\gamma)}$, constraints $A=\begin{bmatrix} A_{(1)} \!\dots\! A_{(\gamma)}\end{bmatrix} \in \R^{n_c \times (n_a\gamma)}$, and $B \in \R^{n_c \times n_a}$, a \textit{constrained matrix zonotope} (CMZ) is a compact set in $\R^{m\times n}$ defined as (see \cite{Alanwar2023Datadriven})
\begin{equation}
    \label{eq:con-mat-zono}
    \!\mathcal{N}\! =\!\Big\{\! C\! + \!\sum_{k=1}^{\gamma} \alpha_{(k)} G_{(k)}\! \Bigm|\!
    \sum_{k=1}^{\gamma} \!\alpha_{(k)} A_{(k)}\! =\! \!B,  \!\alpha_{(k)} \!\in\! [\shortminus 1,\!1] \! \Big\} .
\end{equation}
We associate an identifier $\id \in  \N^{1 \times p }$ with $\mathcal{N}$ for identifying the factors $\!\alpha_{(1)},\!\dots\!,\!\alpha_{(p)}$.
Denote $\mathcal{N} \!= \!\zono{C,G,A,B,\id}_\text{CMZ}$. 
\hfill $\lrcorner$
\end{definition}

Note that matrix zonotopes are a special case of CMZs, denoted by $\mathcal{N}=\zono{C,G,[~],[~],\id}_\text{CMZ}$. 

\begin{definition}
\label{def:conmatpolyzonotopes}  
Given an offset $C\in \R^{m \times n}$, generators $G=\begin{bmatrix} G_{(1)} \dots G_{(\gamma)} \end{bmatrix} \!\in\! \R^{m \times (n  \gamma)}$, exponent $E \!\in\! \N_0^{p \times \gamma}$, constraints $A\!=\!\begin{bmatrix} A_{(1)} \dots A_{(\gamma)} \end{bmatrix} \!\in\! \R^{n_c \times (n_a \gamma)}$, $B \!\in\! \R^{n_c \times n_a}$, and $\CE \!\in\! \N_0^{p \times \gamma}$, a \textit{constrained polynomial matrix zonotope} (CPMZ) is 
\vspace{-1.5em}
\begin{multline}
    \label{eq:con-poly-mat-zono}
    \mathcal{Y} = \Big\{ C + \sum_{i=1}^{\gamma} \bigg( \prod_{k=1}^p \alpha_{(k)}^{E_{(k,i)}} \bigg) \, G_{(i)} \Bigm| \\
    \sum_{i=1}^{\gamma} \bigg( \prod_{k=1}^p \alpha_{(k)}^{\CE_{(k,i)}} \bigg) A_{(i)} = B \,,   \alpha_{(k)} \in [\shortminus 1,1] \Big\} \; .
\end{multline}
We associate an identifier $\id \in \N^{1\times p}$ with $\mathcal{Y}$ for identifying the factors $\alpha_{(1)},\dots,\alpha_{(p)}$.
Furthermore, we use the shorthand notation $\mathcal{Y} = \zono{C,G,E,A,B,R,\id}_\text{CPMZ}$. 
\hfill $\lrcorner$
\end{definition}

To perform operations between two CPZs, the \operator{mergeID} operator is needed to make them compatible with each other.

\begin{proposition}[\operator{mergeID} \cite{kochdumper2020sparse}] \label{prop:mergeID}  
Given two CPZs
\vspace{-0.25em}
\begin{align*}
\mathcal{P}_1 &= \zono{c_1, G_1, E_1, A_1, b_1, R_1,  \id_1}_\text{CPZ} \\
\mathcal{P}_2 &= \zono{c_2, G_2, E_2, A_2, b_2, R_2, \id_2}_\text{CPZ}
\end{align*}
the \operator{mergeID} operator returns two adjusted CPZs that are equivalent to $\mathcal{P}_1$ and $\mathcal{P}_2$:
\vspace{-0.25em}
\begin{multline*}
    \operator{mergeID}(\mathcal{P}_1,\mathcal{P}_2) = \big \{ \langle c_1, G_1, \overline{E}_1, A_1, b_1, \overline{R}_1, \overline{\id} \rangle_\text{CPZ}, \\
    \langle c_2, G_2, \overline{E}_2,A_2,b_2,\overline{R}_2, \overline{\id} \rangle_\text{CPZ} \big \}
\end{multline*}
with $\overline{\id} = \begin{bmatrix} \id_1 & \id_2^{(\cdot,\mathcal{H})} \end{bmatrix}$, $\mathcal{H} = \left\{ i~ |~ \id_2^{(i)} \not\in \id_1 \right\}$, and
\begin{align*}
    \overline{E}_1 &= \begin{bmatrix} E_1 \\ 0_{|\mathcal{H}|\times h_1} \end{bmatrix} \in \R^{a \times h_1}, \qquad
    \overline{R}_1 = \begin{bmatrix} R_1 \\ 0_{|\mathcal{H}|\times q_1} \end{bmatrix} \in \R^{a \times q_1} \\
    \overline{E}_{2}^{(i,\cdot)} &= \begin{cases} E_2^{(j,\cdot)}, & \mathrm{if} ~ \exists j~\overline{\id}_{(i)} = \id_2^{(j)} \\ 
    0_{1\times h_2}, & \mathrm{otherwise} 
    \end{cases}\\
    \overline{R}_2^{(i,\cdot)} &= \begin{cases} R_2^{(j,\cdot)}, & \mathrm{if} ~ \exists j~\overline{\id}_{(i)} = \id_2^{(j)} \\ 
    0_{1\times q_2}, & \mathrm{otherwise}
    \end{cases}
\end{align*}
where $i = 1, \dots, a$ with $a = |\mathcal{H}|+p_1$ for $\id_1 \in  \N^{1 \times p_1 }$.
\hfill $\lrcorner$
\end{proposition}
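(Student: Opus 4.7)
The plan is to verify the claimed equivalence separately for the two augmented CPZs returned by \operator{mergeID}. The proof rests on two elementary observations about the CPZ definition \eqref{eq:con-poly-zono}: first, raising any factor to the zero power contributes a multiplicative identity, so padding the exponent matrices (and constraint exponent matrices) with zero rows leaves every generator monomial and every constraint monomial unchanged; second, since every factor $\alpha_{(k)}$ ranges independently over $[\shortminus 1,1]$, permuting or relabeling the factor indices to match a new identifier list produces an identical set. These two facts together handle both types of modification performed by the operator.

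First I would treat $\mathcal{P}_1$. By construction, $\overline{\id}$ extends $\id_1$ by appending the $|\mathcal{H}|$ identifiers from $\id_2$ that are not in $\id_1$, and $\overline{E}_1$ and $\overline{R}_1$ are obtained by appending $|\mathcal{H}|$ zero rows to $E_1$ and $R_1$, respectively. For any generator index $i$, the monomial in the augmented representation factors as
\begin{equation*}
\prod_{k=1}^{a} \alpha_{(k)}^{\overline{E}_1^{(k,i)}} = \Bigl(\prod_{k=1}^{p_1} \alpha_{(k)}^{E_1^{(k,i)}}\Bigr)\Bigl(\prod_{k=p_1+1}^{a}\alpha_{(k)}^{0}\Bigr) = \prod_{k=1}^{p_1} \alpha_{(k)}^{E_1^{(k,i)}},
\end{equation*}
and the same identity holds for constraint monomials using $\overline{R}_1$. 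Hence the new representation produces exactly the same generator sum and the same constraint equation as $\mathcal{P}_1$, so the sets coincide.

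Next I would treat $\mathcal{P}_2$, where the content of the rows (not just their number) is changed. Here I would exhibit an explicit bijection between the factors of the original and augmented representations: for each $i$ with $\overline{\id}_{(i)} = \id_2^{(j)}$, identify the new factor in position $i$ with the original factor $\alpha_{(j)}$; for the remaining $i$, the rows of $\overline{E}_2$ and $\overline{R}_2$ are identically zero, so by the first observation above, the associated factor does not appear in any monomial and can be assigned an arbitrary value in $[\shortminus 1,1]$. Since the non-zero rows of $\overline{E}_2$ (resp. $\overline{R}_2$) are exactly the rows of $E_2$ (resp. $R_2$) placed at the positions dictated by $\overline{\id}$, every monomial in the augmented representation equals the corresponding monomial in $\mathcal{P}_2$, which gives equality of the generator sums and of the constraint equations.

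The main subtlety I anticipate is ensuring that the map $i \mapsto j$ in the case $\overline{\id}_{(i)} = \id_2^{(j)}$ is well defined, i.e.\ that the $j$ is unique. This requires that the entries of $\id_2$ (and likewise $\id_1$) are distinct, which is the implicit invariant behind the identifier's role as a label for distinct factors; I would state and use this invariant explicitly at the start of the argument. Beyond this bookkeeping, the argument is a direct substitution, with no nontrivial computations involved.
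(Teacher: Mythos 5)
Your proof is correct. The paper itself offers no proof of this proposition --- it is imported verbatim from \cite{kochdumper2020sparse} --- so there is no in-paper argument to compare against; your two observations (zero exponent rows contribute only trivial factors $\alpha_{(k)}^{0}=1$, and relabeling independent factors ranging over $[\shortminus 1,1]$ leaves the set unchanged) are exactly the standard justification, and your explicit attention to the well-definedness of the index map $i \mapsto j$ (i.e., that the entries of $\id_2$ are pairwise distinct) correctly identifies the one implicit invariant the statement relies on.
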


\begin{example}
    Consider a CPZ
\begin{displaymath}
    {\mathcal{P}}_2 \!=\! \Biggl \langle\!
    \begingroup
    \setlength\arraycolsep{2pt}
    \begin{bmatrix}
    3\\3\\4
    \end{bmatrix}\!,\!\begin{bmatrix}
    2& 2\\3 & 0\\1 & 4
    \end{bmatrix}\!,\!\begin{bmatrix}
    3 & 2\\3 & 0
    \end{bmatrix}\!,\!\begin{bmatrix}
    1& 3\\2 & 4
    \end{bmatrix}\!,\!\begin{bmatrix}
    2\\ 5
    \end{bmatrix}\!,\! 
    \begin{bmatrix}
    2 & 0\\2 & 3
    \end{bmatrix}\!,\!\begin{bmatrix}
    1  & 3
    \end{bmatrix}
    \endgroup
    \!\Biggr \rangle_\text{CPZ}
\end{displaymath}
describing the following set
\begin{multline}
 {\mathcal{P}}_2 = \Bigg\{ \begin{bmatrix}
 3\\3\\4
\end{bmatrix} + \begin{bmatrix}
 2\\3\\1
\end{bmatrix} \alpha_{(1)}^3\alpha_{(2)}^3 + \begin{bmatrix}
 2\\0\\4
\end{bmatrix} \alpha_{(1)}^2  \Biggm| \\
\alpha_{(1)}^2\alpha_{(2)}^2\begin{bmatrix}
 1\\2
\end{bmatrix} 
+ \alpha_{(2)}^3\begin{bmatrix}
 3\\4
\end{bmatrix}=\begin{bmatrix}
 2\\5
\end{bmatrix}, \alpha_{(1)},\alpha_{(2)} \in[-1,1]\Bigg\}
\end{multline}
where $\id = \begin{bmatrix} 1 & 3\end{bmatrix}$ identifies the dependent factor $\alpha_{(1)}$ with $\id_{(1)}=1$ and $\alpha_{(2)}$ with $\id_{(2)}=3$. If we apply the operator \operator{mergeID}(${\mathcal{P}}_1, {\mathcal{P}}_2$) where ${\mathcal{P}}_1$ is defined in \eqref{eq:barp1}, we get the following sets with common identifiers.
\begin{align*}
    \bar{\mathcal{P}}_1 &= \Biggl \langle\!
    \begingroup
    \setlength\arraycolsep{2pt}
    \begin{bmatrix}
    0\\2\\1
    \end{bmatrix}\!,\!\begin{bmatrix}
    0 & 1\\3 & 2\\1 & 5
    \end{bmatrix}\!,\!\begin{bmatrix}
    4 & 1\\0 & 2\\0 & 0 
    \end{bmatrix}\!,\!\begin{bmatrix}
    1 & 2\\0 & 0\\3 & 4
    \end{bmatrix}\!,\!\begin{bmatrix}
    2\\0\\2 
    \end{bmatrix},\begin{bmatrix}
    4 & 2\\0 & 2\\0 & 0 
    \end{bmatrix}\!,\!\begin{bmatrix}
    1 &2 & 3
    \end{bmatrix} 
    \endgroup
    \!\Biggr \rangle \nonumber
\\
    \bar{\mathcal{P}}_2 &= \Biggl \langle\!
    \begingroup
    \setlength\arraycolsep{2pt}
    \begin{bmatrix}
    3\\3\\4
    \end{bmatrix}\!,\!\begin{bmatrix}
    2& 2\\3 & 0\\1 & 4
    \end{bmatrix}\!,\!\begin{bmatrix}
    3 & 2\\0 & 0\\ 3& 0
    \end{bmatrix}\!,\!\begin{bmatrix}
    1 & 3\\2 & 4
    \end{bmatrix}\!,\!\begin{bmatrix}
    2\\5
    \end{bmatrix}\!,\!\begin{bmatrix}
    2 & 0\\0 & 0\\2 & 3 
    \end{bmatrix}\!,\!\begin{bmatrix}
    1 &2 &3
    \end{bmatrix}  
    \endgroup
    \!\Biggr \rangle.
\end{align*}
\hfill $\lrcorner$
\end{example}

To preserve dependencies between generators during the addition of two CPZs and prevent over-approximation, we adopt the exact addition operation proposed by \cite{kochdumper2020sparse}.
By retaining dependency information, the resulting sets are less conservative, improving the computational efficiency and reliability of the resulting CPZ.

\begin{definition}
Let $\mathcal{P}_1 = \zono{c_1,G_1,E_1,A_1,b_1,R_1,\id_1}_\text{CPZ} \subset \R^{n}$ and $\mathcal{P}_2 = \zono{c_2,G_2,E_2,A_2,b_2,R_2,\id_2}_\text{CPZ} \subset \R^{n}$. Then, the exact addition is given by 
\begin{multline}
\mathcal{P}_1 \boxplus \mathcal{P}_2 = \bigzono{
\begingroup
\setlength\arraycolsep{2pt}
\left[\begin{array}{l}
c_1 \\
c_2
\end{array}\right],\left[\begin{array}{cc}
G_1 & 0_{n \times h_2} \\
0_{n \times h_1} & G_2
\end{array}\right],\left[\begin{array}{cc}
\overline{E}_1 &  \overline{E}_2
\end{array}\right], 
\endgroup \\
\begingroup
\setlength\arraycolsep{2pt}
\left[\begin{array}{cc}
A_1 & 0_{n_{c_1} \times q_2} \\
0_{n_{c_2} \times q_1} & A_2
\end{array}\right],\left[\begin{array}{l}
b_1 \\
b_2
\end{array}\right],\left[\begin{array}{cc}
\overline{R}_1 & \overline{R}_2
\end{array}\right],\id_{12}
\endgroup
}_\text{CPZ}
\end{multline}
where $\big \{ \overline{E}_1,\overline{E}_2,\overline{R}_1,\overline{R}_2,\id_{12} \big \} \gets \operator{mergeID}(\mathcal{P}_1,\mathcal{P}_2)$.
	\label{prop:exactAddition}
\hfill $\lrcorner$
\end{definition}
\subsection{Assumptions}
We assume access to $n_T \in \N$ input-state trajectories of length $T_i + 1$, with inputs $\{ u_{(k)}\}^{T_i-1}_{k=0}$ and states $\{ x_{(k)}\}^{T_i}_{k=0}$ for $i\in\{0,\dots,n_T-1\}$. For simplicity, we consider the offline data from a single trajectory ($n_T = 1$) of length $T_0$ and organize the input and
noisy state data into matrices:
\begin{equation}
    \label{eq:offline-data}
    \begin{array}{c}
    X_0^+ = \begin{bmatrix}  x_{(1)} & x_{(2)} & \dots & x_{(T_0)}\end{bmatrix},\\[2pt]
    X_0^- = \begin{bmatrix}  x_{(0)}& x_{(1)} & \dots & x_{(T_0-1)}\end{bmatrix}, \\[2pt]
    U_0^- = \begin{bmatrix} u_{(0)} &  u_{(1)}  & \dots & u_{(T_0-1)} \end{bmatrix}.
    \end{array}  
\end{equation} 
Let $D_0^-\!=\! 
\begingroup
\setlength\arraycolsep{2pt}
\begin{bmatrix} X_0^{-\top} & U_0^{-\top} \end{bmatrix}^\top
\endgroup$ and $D_0\!=\! 
\begingroup
\setlength\arraycolsep{2pt}
\begin{bmatrix} X_0^{+\top} & X_0^{-\top} &U_0^{-\top}\end{bmatrix}^\top
\endgroup$. 

\begin{assumption}\label{ass:zon-noise}
    For all $k\in \Z_{\geq 0}$, the noise $w_{(k)}$ is bounded by a known zonotope $\mathcal{Z}_w$, which includes the origin.
    \hfill $\lrcorner$
\end{assumption}
\begin{assumption}\label{ass:rank_D}
    The data matrix $D_0^-$ has full row rank, i.e., $\mathrm{rank}(D_0^-) = n_x + n_u$.
    \hfill $\lrcorner$
\end{assumption}

We represent the sequence of unknown noise as $\{w_{(k)}\}_{k=0}^{T_0}$. From Assumption~\ref{ass:zon-noise}, it follows that 
\[
W_0^- = \begin{bmatrix} w_{(0)} & \cdots & w_{(T_0-1)} \end{bmatrix} \in \mzon_w = \langle C_{\mzon_w}, G_{\mzon_w} \rangle,
\]
where $C_{\mzon_w} \in \R^{n_x \times n_T}$ and $G_{\mzon_w} \in \R^{n_x \times \gamma_{\mathcal{Z}_w} n_T}$ with $\mzon_w$ denoting the MZ resulting from the concatenation of multiple noise zonotopes $\mathcal{Z}_w$~\cite{Alanwar2023Datadriven}. 
Finally, let all system matrices $\begin{bmatrix} \Phi & \Gamma \end{bmatrix}$ of \eqref{eq:model-linear} consistent with the data $D_-$ be
\begin{equation}
    \mathcal{N}_{\Sigma} \!=\! \Big\{\! 
    \begingroup
    \setlength\arraycolsep{3pt}
    \begin{bmatrix} \Phi & \Gamma \end{bmatrix}
    \endgroup
    \!\Bigm|\! X_0^+ \!=\! \Phi X_0^- \!+\! \Gamma U_0^- \!+\! W_0^-, W_0^- \!\in\! \mathcal{M}_w \!\Big\}. \label{eq:Nsig}
\end{equation}
By definition, $\begin{bmatrix} \Phi_\tr & \Gamma_\tr \end{bmatrix} \in \mathcal{N}_{\Sigma}$.

\section{Data-driven Reachability Analysis with Non-convex Set Representations}\label{sec:DDSV}

Given system \eqref{eq:model-linear} with unknown matrices $\Phi_\tr$ and $\Gamma_\tr$, the presence of noise implies multiple feasible matrices $\begin{bmatrix} \Phi & \Gamma \end{bmatrix}$ are consistent with the observed data. 
For the reachability analysis to be reliable, one must incorporate all such data-consistent models.
To this end, we extend the offline approach proposed by \cite{Alanwar2023Datadriven} to continue refining the estimated set of models by leveraging online data.

Our proposed 
data-driven reachability analysis
algorithm 
comprises two steps: 
first, determining a set of models consistent with observed data given bounded uncertainties; and second, using this model set to compute the region in state space containing all possible true states.
Using offline data \eqref{eq:offline-data}, we compute an initial set of models represented by a CMZ. Then, in the online phase, we refine this CMZ iteratively by collecting online input-state trajectories.
For computing the reachable set as a CPZ in the online phase, we propose an exact multiplication method for the time update to multiply the CMZ representing a set of models with the CPZ representing the previous reachable set.

\begin{figure}
    \centering
    \includegraphics[width=1\linewidth]{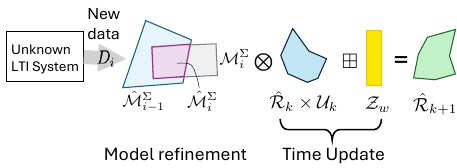}
    \caption{Data-driven reachability analysis with model refinement. Legend: $\otimes$ denotes the exact multiplication operation and $\boxplus$ denotes the exact addition operation.}
    \label{fig:algo-cartoon}
\end{figure}
\vspace{-10pt}
\subsection{Initial Set of Models}

Using the offline data \eqref{eq:offline-data}, we compute the initial set of models $\mathcal{M}_0^{\Sigma}$ that is consistent with the data as follows. 

\begin{lemma}[\cite{Alanwar2023Datadriven}]
\label{lm:sigmaM1}
Consider the offline input-state data \eqref{eq:offline-data} such that Assumption~\ref{ass:rank_D} holds and consider the MZ
\begin{align}
    \mathcal{M}_0^\Sigma = (X_0^+ - \mathcal{M}_w) \begin{bmatrix} 
    X_0^- \\ U_0^- 
    \end{bmatrix}^{\dagger}.
    \label{eq:zonoAB}
\end{align} 
Then, $\begin{bmatrix} \Phi_\tr & \Gamma_\tr \end{bmatrix} \in \mathcal{M}_0^\Sigma$ and $\mathcal{N}_\Sigma \subseteq \mathcal{M}_0^\Sigma$. 
\hfill $\lrcorner$
\end{lemma}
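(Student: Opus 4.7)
The plan is to prove both claims by exploiting the right-invertibility of $D_0^-$ under Assumption~\ref{ass:rank_D}. Since $D_0^- \in \R^{(n_x+n_u)\times T_0}$ has full row rank, the Moore--Penrose pseudoinverse satisfies $D_0^- (D_0^-)^\dagger = I_{n_x+n_u}$. I would state this identity first, as it is the single algebraic fact that drives the entire argument. I would also briefly justify that $\mathcal{M}_0^\Sigma$ in \eqref{eq:zonoAB} is a well-defined MZ: $X_0^+ - \mathcal{M}_w$ is the Minkowski sum of a constant matrix with an MZ (hence an MZ), and right-multiplication of an MZ by the constant matrix $(D_0^-)^\dagger$ yields an MZ.

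For the first claim, $[\Phi_\tr \;\; \Gamma_\tr] \in \mathcal{M}_0^\Sigma$, I would write the data-generating equation over the entire offline window as
\begin{equation*}
    X_0^+ = \begin{bmatrix} \Phi_\tr & \Gamma_\tr \end{bmatrix} D_0^- + W_0^{-\star},
\end{equation*}
where $W_0^{-\star} \in \mathcal{M}_w$ is the concrete noise realization along the trajectory. Rearranging and right-multiplying by $(D_0^-)^\dagger$ yields $(X_0^+ - W_0^{-\star})(D_0^-)^\dagger = [\Phi_\tr \;\; \Gamma_\tr] D_0^-(D_0^-)^\dagger = [\Phi_\tr \;\; \Gamma_\tr]$. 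Since $W_0^{-\star} \in \mathcal{M}_w$, the left-hand side is by construction an element of $\mathcal{M}_0^\Sigma$, completing this step.

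For the inclusion $\mathcal{N}_\Sigma \subseteq \mathcal{M}_0^\Sigma$, I would take an arbitrary $[\Phi \;\; \Gamma] \in \mathcal{N}_\Sigma$. By the definition \eqref{eq:Nsig}, there exists some $W_0^- \in \mathcal{M}_w$ with $X_0^+ = \Phi X_0^- + \Gamma U_0^- + W_0^- = [\Phi \;\; \Gamma] D_0^- + W_0^-$. Right-multiplying by $(D_0^-)^\dagger$ and using $D_0^-(D_0^-)^\dagger = I$ gives $[\Phi \;\; \Gamma] = (X_0^+ - W_0^-)(D_0^-)^\dagger$, which belongs to $\mathcal{M}_0^\Sigma$ by definition.

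The argument is short because the full row rank assumption reduces everything to the right-inverse identity. The only potentially delicate point, which I would flag but not expand on, is that the assumption gives a right inverse rather than a left inverse, so the inclusion is one-directional: $\mathcal{N}_\Sigma \subseteq \mathcal{M}_0^\Sigma$ holds, but equality generally fails since $(D_0^-)^\dagger D_0^- \neq I$, which is consistent with the paper's subsequent motivation to refine $\mathcal{M}_0^\Sigma$ using online data via CMZ intersection.
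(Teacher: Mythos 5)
Your proof is correct and is essentially the standard argument for this result: the paper itself gives no proof (the lemma is imported from Alanwar et al.~\cite{Alanwar2023Datadriven}), and your reconstruction via the right-inverse identity $D_0^-(D_0^-)^\dagger = I_{n_x+n_u}$ is exactly how that reference establishes it. Note that your first claim is in fact a corollary of your second, since $\begin{bmatrix}\Phi_\tr & \Gamma_\tr\end{bmatrix}\in\mathcal{N}_\Sigma$ by definition of $\mathcal{N}_\Sigma$ in \eqref{eq:Nsig}.
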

\begin{remark}
Requiring the full row rank in Assumption~\ref{ass:rank_D}, i.e., $\begin{bmatrix} X^{-\top} & U^{-\top}\end{bmatrix}^{\top}=n_x+n_u$, implies that there exists a right-inverse of the matrix $\left[\begin{array}{ll}
X^{-\top} & U^{-\top}
\end{array}\right]^{\top}$.
\hfill $\lrcorner$
\end{remark}

\subsection{Intersection Between CMZs}

For the following result, we omit the identifier $\id$ of CMZs as it is irrelevant when intersecting two sets.

\begin{proposition}[Intersection] 
Consider two CMZs 
\begin{align*}
    \mathcal{N}_1 &= \zono{C_1,G_1,A_1,B_1}_\text{CMZ}\subset \R^{n_x \times n} \\ 
    \mathcal{N}_2 &= \zono{C_2,G_2,A_2,B_2}_\text{CMZ}\subset\R^{n_x \times n}.
\end{align*}
Then, their intersection is given by
\begin{equation}
\mathcal{N}_1 \cap \mathcal{N}_2=  \bigzono{ C_1,\bigg[ G_1,0_{ n_x\times (n\gamma_2)}\bigg],\hat{A},\hat{B} }_\text{CMZ} 
\label{generalintersection}
\end{equation}
where 
\begin{displaymath}
\begingroup\setlength\arraycolsep{2pt}
\hat{A}=\left[\begin{array}{cc}
\hat{A}_1 & 0_{n_{c_1}n_{a_1} \times \gamma_2}\\
0_{n_{c_2}n_{a_2} \times \gamma_1}&\hat{A}_2\\
\hat{G}_1&-\hat{G}_2
\end{array}\right], ~ \hat{B}=\left[\begin{array}{c}
\text{vec}\big({B}_1\big) \\
\text{vec}\big({B}_2\big) \\
\text{vec}\big({C}_2-{C}_1\big)
\end{array}\right]
\endgroup
\end{displaymath}
with $\hat{A}_i= [\text{vec}({A}_i^{(1)}),\dots,\text{vec}({A}_i^{({\gamma_i})})]$ and
$\hat{G}_i= [\text{vec}({G}_i^{(1)}),\dots,\text{vec}({G}_i^{({\gamma_i})})]$, for $i=1,2$.
\hfill $\lrcorner$
\label{Intersection}
\end{proposition}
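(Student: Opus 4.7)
\textit{Proof plan.} The plan is to prove the set equality directly from Definition~\ref{def:conmatzonotopes}. The core idea is to parameterize $\mathcal{N}_1 \cap \mathcal{N}_2$ using the factors of $\mathcal{N}_1$---hence the center $C_1$ and the generator block $[G_1,\, 0_{n_x \times (n\gamma_2)}]$---while introducing the factors of $\mathcal{N}_2$ as auxiliary variables that appear only in the constraints. The condition that the same matrix $M$ arises from both parameterizations translates into the third block row $[\hat G_1,\, -\hat G_2]$ of $\hat A$ with right-hand side $\text{vec}(C_2 - C_1)$, while the first two block rows carry over the original constraints of $\mathcal{N}_1$ and $\mathcal{N}_2$, respectively, vectorized.

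For the forward inclusion ($\subseteq$), I take $M \in \mathcal{N}_1 \cap \mathcal{N}_2$ and extract factor vectors $\alpha^{1} \in [-1,1]^{\gamma_1}$ and $\alpha^{2} \in [-1,1]^{\gamma_2}$ witnessing membership in each CMZ. Stacking them as $\alpha = [(\alpha^{1})^\top,\, (\alpha^{2})^\top]^\top$ and applying $\text{vec}$ to each of the three matrix equations---namely $\sum_k \alpha^{1}_{(k)} A_1^{(k)} = B_1$, $\sum_k \alpha^{2}_{(k)} A_2^{(k)} = B_2$, and the center-matching equation $\sum_k \alpha^{1}_{(k)} G_1^{(k)} - \sum_k \alpha^{2}_{(k)} G_2^{(k)} = C_2 - C_1$---reproduces exactly the three block rows of $\hat A \alpha = \hat B$. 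The padded generator $[G_1,\, 0]$ applied to $\alpha$ equals $\sum_k \alpha^{1}_{(k)} G_1^{(k)}$, so the parameterization yields $C_1 + \sum_k \alpha^{1}_{(k)} G_1^{(k)} = M$, as required.

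For the reverse inclusion ($\supseteq$), I start with a matrix $M$ in the CMZ on the right of \eqref{generalintersection}, with factor vector $\alpha = [(\alpha^{1})^\top,\, (\alpha^{2})^\top]^\top \in [-1,1]^{\gamma_1+\gamma_2}$. The zero padding in $[G_1,\, 0]$ yields $M = C_1 + \sum_k \alpha^{1}_{(k)} G_1^{(k)}$. Undoing $\text{vec}$ on the first block row of $\hat A \alpha = \hat B$ recovers the constraint of $\mathcal{N}_1$, hence $M \in \mathcal{N}_1$. Undoing $\text{vec}$ on the third block row gives $M = C_2 + \sum_k \alpha^{2}_{(k)} G_2^{(k)}$, and the second block row gives the constraint of $\mathcal{N}_2$, so $M \in \mathcal{N}_2$. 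Since $\alpha \in [-1,1]^{\gamma_1+\gamma_2}$, both sub-factor vectors lie in the required hypercubes.

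The main obstacle I expect is purely notational: carefully tracking the vectorization so that the block heights $n_{c_1} n_{a_1}$, $n_{c_2} n_{a_2}$, and $n_x n$ of $\hat A$ match $\text{vec}(B_1)$, $\text{vec}(B_2)$, and $\text{vec}(C_2 - C_1)$, and recognizing that the column-vector constraint used in Proposition~\ref{Intersection} corresponds to the $n_a = 1$ instantiation of the matrix constraint $\sum_k \alpha_{(k)} A_{(k)} = B$ from Definition~\ref{def:conmatzonotopes}. Once this bookkeeping is settled, both inclusions follow by matching factors one-to-one, using linearity of $\text{vec}$.
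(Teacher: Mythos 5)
Your proof is correct and takes essentially the same route the paper intends: the paper omits its own proof but states that it parallels the constrained-zonotope intersection argument of \cite{scott2016constrained}, which is precisely your factor-stacking construction (factors of $\mathcal{N}_1$ carry the generators, factors of $\mathcal{N}_2$ enter only through the constraints, and the matching condition $C_1+\sum_k\alpha^{1}_{(k)}G_1^{(k)}=C_2+\sum_k\alpha^{2}_{(k)}G_2^{(k)}$ becomes the third block row) lifted to matrix sets via vectorization. Both inclusions and the $n_a=1$ bookkeeping for $\hat A$, $\hat B$ are handled correctly.
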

Although the proof is non-trivial, it follows lines of arguments similar to those of \cite[Proposition 1]{scott2016constrained}. It is therefore omitted due to space restrictions.

\Comp
The computations of $\text{vec}(B_1)$, $\text{vec}(B_2)$, and $\text{vec}(C_2 - C_1)$ require 
$\mathcal{O}(n_{c_1}n_{a_1})$, $\mathcal{O}(n_{c_2}n_{a_2})$, and $\mathcal{O}(2n_x n)$ operations, respectively.
The constructions of $\hat{A}_1$, $\hat{A}_2$, $\hat{G}_1$, and $\hat{G}_2$ involve complexities 
$\mathcal{O}(n_{c_1}n_{a_1}\gamma_1)$, $\mathcal{O}(n_{c_2}n_{a_2}\gamma_2)$, $\mathcal{O}(n_x n \gamma_1)$, and 
$\mathcal{O}(n_x n \gamma_2)$, respectively. Thus, the overall computational complexity of intersection \eqref{Intersection} is
$
\mathcal{O}\left(\gamma_1(n_{c_1}n_{a_1} + n_x n) + \gamma_2(n_{c_2}n_{a_2} + n_x n)\right).
$
\hfill $\diamond$

\begin{remark}
The intersection $\mathcal{N}_1 \cap \mathcal{N}_2 = \hat{\mathcal{N}}$ in \eqref{generalintersection} yields constraint matrices $\hat{A} \in \R^{\hat{m} \times (\gamma_1 + \gamma_2)}$ and $\hat{B} \in \R^{\hat{m} \times 1}$, with $\hat{m} = n_{c_1}n_{a_1} + n_{c_2}n_{a_2} + n_x n$. This corresponds to a special case of CMZs with $n_a = 1$, adopted to facilitate exact multiplications with CPZs in subsequent computations. When $\hat{m}/\hat{n}_c \in \N$, the vector $\hat{B}$ can be reshaped as a matrix $\hat{B} \in \R^{\hat{n}_c \times (\hat{m}/\hat{n}_c)}$ by the reshaping operator defined as follows:
\[
\operator{Convert}(\hat B\in \R^{\hat m \times 1},\hat n_c)=\hat{B}\in \R^{\hat n_{{c}} \times (\hat{m}/\hat{n}_c)}.
\]
Accordingly, the reshaped constraint matrix $\hat{A}$ lies in the space
$
\hat{A} \in \R^{\hat n_{{c}} \times \left( (\hat{m}/\hat{n}_c) \cdot (\gamma_1 + \gamma_2) \right)},
$
where
$
\hat{A}=\big[\operator{Convert}(\hat{A}^{(1)}), \dots,\operator{Convert}(\hat{A}^{(\gamma_1+\gamma_2)})\big].
$
\hfill $\lrcorner$
\end{remark}

\subsection{Iterative Refinement of Set of Models}

Similar to Lemma~\ref{lm:sigmaM1}, given any input-state data $D_i = (X_i^+, X_i^-, U_i^-)$ of system \eqref{eq:model-linear} over the interval $t \in [T_{i-1}, T_i]$, for $i\geq 1$, and that Assumption~\ref{ass:rank_D} holds, the resulting MZ $\mathcal{M}_i^\Sigma$ contains all matrices $\begin{bmatrix}\Phi & \Gamma\end{bmatrix}$ consistent with the data and noise bound, i.e., $\mathcal{N}_\Sigma \subseteq \mathcal{M}_i^\Sigma$.
At iteration~$i$ of the refinement phase ($i \geq 1$), we intersect the set of models 
${\mathcal{M}}_{i}^{\Sigma}$ derived from newly obtained input-state trajectories with the previous model set $\hat{\mathcal{M}}_{i-1}^{\Sigma}$ to get refined set $\hat{\mathcal{M}}_{i}^{\Sigma}$. The initial set of models and refined sets are subsequently used to compute an over-approximation of the reachable sets of the unknown system, as illustrated in Fig.~\ref{fig:algo-cartoon}.

\begin{proposition}
\label{prop:reach_lin}
Consider data $D_i = (X_i^+,X_i^-,U_i^-)$, for $i\geq 0$, with $D_0$ an initial data and $D_1,D_2,\dots$ subsequent new data obtained from system \eqref{eq:model-linear}, where each data $D_i$ satisfies Assumption~\ref{ass:rank_D} and results in a set of models $\mathcal{M}_i^\Sigma$ using Lemma~\ref{lm:sigmaM1}.
The refined set of models  
\begin{align}
&\hat{\mathcal{M}}_i^\Sigma=
\mathcal{M}_i^\Sigma \cap \hat{\mathcal{M}}_{i-1}^\Sigma, \quad i\geq 1,\quad \text{with}\quad \hat{\mathcal{M}}_0^\Sigma=
\mathcal{M}_0^\Sigma \label{cmzintersection}
\end{align}
is a CMZ and contains the true model $\begin{bmatrix} \Phi_\tr & \Gamma_\tr \end{bmatrix}$.
\end{proposition}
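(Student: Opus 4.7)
The plan is to proceed by induction on $i$, using Lemma~\ref{lm:sigmaM1} to handle the base case and each freshly constructed $\mathcal{M}_i^\Sigma$, and using Proposition~\ref{Intersection} to handle the intersection step. The claim has two parts that must be verified simultaneously: that $\hat{\mathcal{M}}_i^\Sigma$ remains a valid CMZ at every iteration, and that it always contains the true model $\begin{bmatrix} \Phi_\tr & \Gamma_\tr \end{bmatrix}$.

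For the base case $i=0$, I would invoke Lemma~\ref{lm:sigmaM1} directly with data $D_0$, which by Assumption~\ref{ass:rank_D} yields that $\hat{\mathcal{M}}_0^\Sigma = \mathcal{M}_0^\Sigma$ is an MZ --- hence a CMZ with empty constraint matrices $A=[~]$ and $B=[~]$ as noted after Definition~\ref{def:conmatzonotopes} --- and that $\begin{bmatrix} \Phi_\tr & \Gamma_\tr \end{bmatrix} \in \mathcal{M}_0^\Sigma$.

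For the inductive step, I would assume that $\hat{\mathcal{M}}_{i-1}^\Sigma$ is a CMZ containing the true model. Applying Lemma~\ref{lm:sigmaM1} to the new data $D_i$ (permissible because $D_i$ satisfies Assumption~\ref{ass:rank_D}), one obtains an MZ $\mathcal{M}_i^\Sigma$ with $\begin{bmatrix} \Phi_\tr & \Gamma_\tr \end{bmatrix}\in\mathcal{M}_i^\Sigma$; again this is a CMZ by the same special-case observation. Proposition~\ref{Intersection} then guarantees that the intersection $\hat{\mathcal{M}}_i^\Sigma = \mathcal{M}_i^\Sigma \cap \hat{\mathcal{M}}_{i-1}^\Sigma$ admits an explicit CMZ representation, closing the structural half of the induction. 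Containment of the true model follows from the elementary set-theoretic fact that $\begin{bmatrix} \Phi_\tr & \Gamma_\tr \end{bmatrix}$ lies in both operands, hence in their intersection.

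There is essentially no hard step here; the proof is purely a bookkeeping induction that chains two previously established results. The only point that deserves a sentence of care is the observation that Proposition~\ref{Intersection} was stated under a slightly specialized CMZ format (with $n_a=1$, as pointed out in the remark following it), so I would briefly note that any CMZ input can be cast into this form by applying the \texttt{Convert} reshaping operator on its constraint block, after which the intersection formula \eqref{generalintersection} applies directly and the resulting $\hat{\mathcal{M}}_i^\Sigma$ is a bona fide CMZ to be carried into the next iteration.
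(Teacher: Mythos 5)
Your proof is correct and follows the same route the paper intends: the paper's own proof is the single line ``a straightforward consequence of Lemma~\ref{lm:sigmaM1},'' and your induction simply makes explicit the two ingredients that line relies on (membership of the true model in each $\mathcal{M}_i^\Sigma$ via Lemma~\ref{lm:sigmaM1}, and closure of CMZs under intersection via Proposition~\ref{Intersection}). Nothing is missing; your added remark about the $n_a=1$ reshaping is a reasonable extra care point that the paper glosses over.
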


\begin{proof}
A straightforward consequence of Lemma~\ref{lm:sigmaM1}.
\end{proof}

\subsection{Exact Multiplication for Non-Convex Reachability}
At time $k$, as depicted in Fig.~\ref{fig:algo-cartoon}, we need to iteratively multiply $\hat{\mathcal{M}}_i^\Sigma$ with the cartesian product of reachable set $\hat{\mathcal{R}}_k$ and the input set $\mathcal{U}_k$ to obtain the next reachable set $\hat{\mathcal{R}}_{k+1}$.
Although $\hat{\mathcal{M}}_i^\Sigma$ is a CMZ, notice that CMZs are a special case CPMZ. 
Therefore, to be more general, we provide an exact multiplication method for CPMZ with CPZ. 

Consider ${\mathcal{Y}} = \zono{C_{\mathcal{Y}}, G_{\mathcal{Y}}, E_{\mathcal{Y}}, A_{\mathcal{Y}}, B_{\mathcal{Y}}, R_{\mathcal{Y}}, \id_{\mathcal{Y}}}_\text{CPMZ}\subset \R^{n_x \times n }$ and ${\mathcal{P}} = \zono{c_{\mathcal{P}}, G_{\mathcal{P}}, E_{\mathcal{P}}, A_{\mathcal{P}}, b_{\mathcal{P}}, R_{\mathcal{P}}, \id_{\mathcal{P}}}_\text{CPZ}\subset \R^{n}$. Since both sets may contain shared dependent factors, consistent indexing of exponent matrices is required. To address this, we leverage the \operator{mergeID} operator, which constructs a unified exponent representation to preserve these dependencies:
\begin{multline}
  \hspace{-7pt}
  \operator{mergeID}(\mathcal{Y},\mathcal{P}) \!=\! \big \{\! \underbrace{\langle C_{\mathcal{Y}},\! G_{\mathcal{Y}},\! \overline{E}_\mathcal{Y},\! A_{\mathcal{Y}},\! B_{\mathcal{Y}},\! \overline{R}_\mathcal{Y},\! \id_\mathcal{YP} \rangle_\text{CPMZ}}_{\bar{\mathcal{Y}}}, \\
  \underbrace{\langle c_{\mathcal{P}},\! G_{\mathcal{P}},\! \overline{E}_{\mathcal{P}},\! A_{\mathcal{P}},\! b_{\mathcal{P}},\! \overline{R}_{\mathcal{P}},\! \id_{\mathcal{YP}} \rangle_\text{CPZ}}_{\bar{\mathcal{P}}} \big \}.  \label{mergetwoexact}
\end{multline}

\begin{proposition}[Exact Multiplication] \label{prop:multi}  
Given a CPMZ ${\mathcal{Y}} = \zono{C_{\mathcal{Y}}, G_{\mathcal{Y}}, E_{\mathcal{Y}}, A_{\mathcal{Y}}, B_{\mathcal{Y}}, R_{\mathcal{Y}}, \id_{\mathcal{Y}}}_\text{CPMZ}\subset \R^{n_x \times n }$ and a CPZ $\mathcal{P} = \langle c_{\mathcal{P}}, G_{\mathcal{P}}, {E}_{\mathcal{P}}, A_{\mathcal{P}}, b_{\mathcal{P}}, {R}_{\mathcal{P}}, \id_{\mathcal{P}} \rangle_\text{CPZ}\subset \R^{n}$, the following identity holds
\begin{multline}
\mathcal{Y} \otimes \mathcal{P} =  \bigzono{C_{\mathcal{Y}} c_{\mathcal{P}},\begin{bmatrix} G_{\mathcal{Y}}c_{\mathcal{P}} & C_{\mathcal{Y}} G_{\mathcal{P}}& G_{f} \end{bmatrix} , E_{\mathcal{YP}}, \\
    A_{\mathcal{Y}\mathcal{P}}, {B}_{\mathcal{Y}\mathcal{P}} , R_{\mathcal{YP}} ,\id_{\mathcal{YP}}}_\text{CPZ},  \label{eq:matczono}
\end{multline}
where $\mathcal{Y}\otimes \mathcal{P} \subset \R^{n_x}$ and
\begin{align*}
    A_{\mathcal{Y}\mathcal{P}} &= \begin{bmatrix} \text{vec}(A^{(1)}_{\mathcal{Y}}) & \dots & \text{vec}
    (A_{\mathcal{Y}}^{(\gamma)}) & 0_{n_{c}n_{a}  \times q_{\mathcal{P}}} \\
     0_{m_{\mathcal{P}}  \times 1} & \dots & 0_{m_{\mathcal{P}}  \times 1} & A_{\mathcal{P}}
    \end{bmatrix} \\
    {B}_{\mathcal{Y}\mathcal{P}} &=\begin{bmatrix} \text{vec}(B_{\mathcal{Y}}) \\  b_{\mathcal{P}}\end{bmatrix} \\
    E_\mathcal{YP} &= \bigg[ \overline{E}_{\mathcal{Y}}, \overline{E}_{\mathcal{P}}, 
    \Big[\overline{E}_{\mathcal{Y}}^{(\cdot,1)} + \overline{E}_{\mathcal{P}}^{(\cdot,1)}\Big],
    \dots,
    \Big[\overline{E}_{\mathcal{Y}}^{(\cdot,1)} + \overline{E}_{\mathcal{P}}^{(\cdot,h_\mathcal{P})}\Big], \\
    & \dots, \Big[\overline{E}_{\mathcal{Y}}^{(\cdot,\gamma)} + \overline{E}_{\mathcal{P}}^{(\cdot,1)}\Big],
    \dots, \Big[\overline{E}_{\mathcal{Y}}^{(\cdot,\gamma)}+\overline{E}_{\mathcal{P}}^{(\cdot,h_\mathcal{P})}\Big] \bigg]  
    \\
    {R_\mathcal{YP}} &= \begin{bmatrix} \overline{R}_{\mathcal{Y}} & \overline{R}_{\mathcal{P}} \end{bmatrix} \\
    G_{f} &= \begin{bmatrix} g_{f}^{(1)} & \dots & g_{f}^{(h_\mathcal{P}\gamma)} \end{bmatrix}
\end{align*}
with $\overline{E}_\mathcal{Y},\overline{E}_\mathcal{P},\overline{R}_\mathcal{Y},\overline{R}_\mathcal{P},\id_{\mathcal{YP}}$ obtained from \eqref{mergetwoexact} and
\begin{displaymath}
    g_{f}^{(k)} =  G^{(i)}_{\mathcal{Y}} G_{\mathcal{P}}^{(\cdot,j)}, \quad k=h_{\mathcal{P}} (i-1) + j,
\end{displaymath}
for $i=1,\dots,\gamma$,\;$j=1,\dots,h_\mathcal{P}$.
\end{proposition}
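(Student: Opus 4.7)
The plan is to expand the matrix--vector product after synchronizing the dependent factors of $\mathcal{Y}$ and $\mathcal{P}$, and then repackage the expansion as a CPZ. First I apply Proposition~\ref{prop:mergeID} through \eqref{mergetwoexact} to obtain $\bar{\mathcal{Y}}$ and $\bar{\mathcal{P}}$ sharing the identifier $\id_{\mathcal{YP}}$ and the common factor bank $\alpha_{(1)},\dots,\alpha_{(p)}$. This step preserves the two sets exactly because \operator{mergeID} only pads zero rows in $E$ and $R$ for factors not originally present in one of the two representations.

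Second, I take arbitrary $Y \in \bar{\mathcal{Y}}$ and $x \in \bar{\mathcal{P}}$, abbreviate the monomials $\mu_i(\alpha) = \prod_{k=1}^p \alpha_{(k)}^{\overline{E}_{\mathcal{Y}}^{(k,i)}}$ and $\nu_j(\alpha) = \prod_{k=1}^p \alpha_{(k)}^{\overline{E}_{\mathcal{P}}^{(k,j)}}$, and distribute
\begin{align*}
Yx &= \Bigl(C_{\mathcal{Y}} + \sum_{i=1}^{\gamma} \mu_i\, G_{\mathcal{Y}}^{(i)}\Bigr)\Bigl(c_{\mathcal{P}} + \sum_{j=1}^{h_{\mathcal{P}}} \nu_j\, G_{\mathcal{P}}^{(\cdot,j)}\Bigr) \\
&= C_{\mathcal{Y}} c_{\mathcal{P}} + \sum_{i=1}^{\gamma} \mu_i\, G_{\mathcal{Y}}^{(i)} c_{\mathcal{P}} + \sum_{j=1}^{h_{\mathcal{P}}} \nu_j\, C_{\mathcal{Y}} G_{\mathcal{P}}^{(\cdot,j)} + \sum_{i=1}^{\gamma}\sum_{j=1}^{h_{\mathcal{P}}} \mu_i\nu_j\, G_{\mathcal{Y}}^{(i)} G_{\mathcal{P}}^{(\cdot,j)}.
\end{align*}
These four summands identify the offset $C_{\mathcal{Y}}c_{\mathcal{P}}$ and the three generator blocks $G_{\mathcal{Y}}^{(i)} c_{\mathcal{P}}$, $C_{\mathcal{Y}} G_{\mathcal{P}}^{(\cdot,j)}$, and $g_{f}^{(k)} = G_{\mathcal{Y}}^{(i)} G_{\mathcal{P}}^{(\cdot,j)}$ with $k = h_{\mathcal{P}}(i-1)+j$. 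Since $\mu_i$ and $\nu_j$ are monomials in the same factor bank, $\mu_i\nu_j$ is again a monomial whose exponent column is $\overline{E}_{\mathcal{Y}}^{(\cdot,i)} + \overline{E}_{\mathcal{P}}^{(\cdot,j)}$, which matches the announced ordering of $E_{\mathcal{YP}}$.

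Third, I assemble the constraints. The CPMZ constraint $\sum_i \rho_i A_{\mathcal{Y}}^{(i)} = B_{\mathcal{Y}}$, with $\rho_i = \prod_k \alpha_{(k)}^{\overline{R}_{\mathcal{Y}}^{(k,i)}}$, is matrix-valued and thus not directly a CPZ constraint. Applying $\text{vec}(\cdot)$ to both sides yields the equivalent scalar identity $\sum_i \rho_i\,\text{vec}(A_{\mathcal{Y}}^{(i)}) = \text{vec}(B_{\mathcal{Y}})$. Stacking this with the CPZ constraint $\sum_j \sigma_j A_{\mathcal{P}}^{(\cdot,j)} = b_{\mathcal{P}}$, where $\sigma_j = \prod_k \alpha_{(k)}^{\overline{R}_{\mathcal{P}}^{(k,j)}}$, produces exactly the block-diagonal matrix $A_{\mathcal{YP}}$, the stacked right-hand side $B_{\mathcal{YP}}$, and the concatenated constraint exponent matrix $R_{\mathcal{YP}} = \begin{bmatrix}\overline{R}_{\mathcal{Y}} & \overline{R}_{\mathcal{P}}\end{bmatrix}$, since both constraint blocks depend on the common factor bank.

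Finally, I verify both inclusions. Forward: any point $Yx$ with $Y \in \mathcal{Y}$, $x \in \mathcal{P}$ arises from factors $\alpha \in [-1,1]^p$ satisfying the original CPMZ and CPZ constraints, hence their vectorized and stacked form, so the expansion above places $Yx$ in the right-hand-side CPZ. Conversely, every point of that CPZ is produced by some $\alpha \in [-1,1]^p$ satisfying the two stacked constraint blocks, which in turn define admissible $Y \in \bar{\mathcal{Y}}$ and $x \in \bar{\mathcal{P}}$ whose product reproduces the point. The main subtlety I expect to require care is showing that the block-diagonal separation in $A_{\mathcal{YP}}$ does not inadvertently decouple the two constraint systems: this is ensured precisely because $R_{\mathcal{YP}}$ acts on a single shared $\alpha$-vector that is also the one driving the generator monomials, so all dependencies are preserved.
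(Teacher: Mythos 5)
Your proposal is correct and follows essentially the same route as the paper's proof: synchronize factors via \operator{mergeID}, distribute the product $Yx$ into the four summands that yield the offset and the three generator blocks with exponent columns $\overline{E}_{\mathcal{Y}}^{(\cdot,i)}+\overline{E}_{\mathcal{P}}^{(\cdot,j)}$, vectorize the matrix-valued CPMZ constraint and stack it with the CPZ constraint to obtain $A_{\mathcal{YP}}$, $B_{\mathcal{YP}}$, $R_{\mathcal{YP}}$, and then check both inclusions. Your closing remark about the shared factor vector tying the two constraint blocks together is exactly the point the paper relies on in its converse direction.
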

\begin{proof}
    Let $\hat{\mathcal{P}}$ be the right-hand side of \eqref{eq:matczono} and let $Y\in \mathcal{Y}$ and $p \in \mathcal{P}$. 
    We will prove that $\mathcal{Y}\mathcal{P} \subseteq \hat{\mathcal{P}}$ and $\hat{\mathcal{P}} \subseteq \mathcal{Y}\mathcal{P}$, for all $Y\in\mathcal{Y}$ and $p\in\mathcal{P}$. 
    Note that with the implementation of $\operator{mergeID}(\mathcal{Y},\mathcal{P})$, $Y$ and $p$ can be written as:
    \begin{align}
    \exists \hat{\alpha}_{([1:a])} &: & Y &= C_{\mathcal{Y}} + \sum_{i=1}^{\gamma}\bigg( \prod_{k=1}^{a}
    \hat{\alpha}_{(k)}^{\overline{E}_\mathcal{Y}^{(k,i)}} \bigg)  G_{\mathcal{Y}}^{(i)}
    \nonumber\\
    \exists \hat{\alpha}_{([1:a])} &: & p 
    &= \! c_{\mathcal{P}} +   \sum_{i=1}^{h_\mathcal{P}} \bigg( \prod_{k=1}^{a}
    \hat{\alpha}_{(k)}^{\overline{E}_\mathcal{P}^{(k,i)}} \bigg) G_{\mathcal{P}}^{(\cdot,i)} \label{alphadetail}
    \end{align}
    where $a = |\mathcal{H}|+\gamma$, $\mathcal{H} = \left\{ i~ |~ \id_{{\mathcal{P}}(i)} \not\in \id_{\mathcal{Y}} \right\}$. 
    Thus, we have
    \begin{multline}
    Yp = C_{\mathcal{Y}} c_{\mathcal{P}} +  \sum_{i=1}^{\gamma}\bigg( \prod_{k=1}^{a}
    \hat{\alpha}_{(k)}^{\overline{E}_\mathcal{Y}^{(k,i)}} \bigg)  G_{\mathcal{Y}}^{(i)}  c_{\mathcal{P}} \\
    + C_{\mathcal{Y}}\sum_{i=1}^{h_{\mathcal{P}}} \bigg( \prod_{k=1}^{a}
    \hat{\alpha}_{(k)}^{\overline{E}_\mathcal{P}^{(k,i)}} \bigg)  G_{\mathcal{P}}^{(\cdot,i)} \\
    + \sum_{i=1}^{\gamma} \sum_{j=1}^{h_{\mathcal{P}}} \bigg( \prod_{k=1}^{a}
    \hat{\alpha}_{(k)}^{\overline{E}_\mathcal{Y}^{(k,i)}} \bigg) \bigg( \prod_{k=1}^{a}
    \hat{\alpha}_{(k)}^{\overline{E}_\mathcal{P}^{(k,j)}} \bigg)  G_{\mathcal{Y}}^{(i)}  G_{\mathcal{P}}^{(\cdot,j)} . \label{eq:P_details}
    \end{multline}
    For the second and third terms on the right-hand side of \eqref{eq:P_details}, we have two sets of factors, each containing $\gamma$ and $h_\mathcal{P}$ elements. Consequently, the first $\gamma + h_\mathcal{P}$ entries of $\hat{\alpha}$ and the columns of $E_\mathcal{YP}$ are defined accordingly as follows:
    \begin{align}
    \hat{\alpha}_{([1:\gamma+h_\mathcal{P}])} &=\bigg[\prod_{k=1}^{a}
    \hat{\alpha}_{(k)}^{\overline{E}_\mathcal{Y}^{(k,1)}}, \dots,\prod_{k=1}^{a}
    \hat{\alpha}_{(k)}^{\overline{E}_\mathcal{Y}^{(k,\gamma)}} \nonumber\\
    & \hspace{1cm} 
    \prod_{k=1}^{a} \hat{\alpha}_{(k)}^{\overline{E}_\mathcal{P}^{(k,1)}}, \dots, 
    \prod_{k=1}^{a} \hat{\alpha}_{(k)}^{\overline{E}_\mathcal{P}^{(k,h_\mathcal{P})}}\bigg] \\{E}_{\mathcal{YP}}^{([1:\gamma+h_\mathcal{P}])} &= \Big[ \overline{E}_{\mathcal{Y}},\overline{E}_{\mathcal{P}}\Big].
\end{align} 
Because of $\operator{mergeID}(\mathcal{Y},\mathcal{P})$ in \eqref{mergetwoexact}, the fourth term on the right-hand side of \eqref{eq:P_details} can be expressed as:
\begin{multline}
\sum_{i=1}^{\gamma} \sum_{j=1}^{h_{\mathcal{P}}} \bigg( \prod_{k=1}^{a}
  \hat{\alpha}_{(k)}^{\overline{E}_\mathcal{Y}^{(k,i)}}  
  \hat{\alpha}_{(k)}^{\overline{E}_\mathcal{P}^{(k,j)}} \bigg)  G_{\mathcal{Y}}^{(i)}  G_{\mathcal{P}}^{(\cdot,j)}= \\
  \sum_{i=1}^{\gamma} \sum_{j=1}^{h_{\mathcal{P}}} \bigg( \prod_{k=1}^{a}
  \hat{\alpha}_{(k)}^{\overline{E}_\mathcal{Y}^{(k,i)}+  
  \overline{E}_\mathcal{P}^{(k,j)}} \bigg)  G_{\mathcal{Y}}^{(i)}  G_{\mathcal{P}}^{(\cdot,j)}. \label{1:n+p}
\end{multline} 
Concatenating the factors in \eqref{1:n+p}, we have
\begin{multline*}
\hat{\alpha}_{([\gamma+h_\mathcal{P}+1:\gamma+h_\mathcal{P}+\gamma h_\mathcal{P}])} \\ 
=\bigg[\prod_{k=1}^{a}
  \hat{\alpha}_{(k)}^{\overline{E}_\mathcal{Y}^{(k,1)}+  
  \overline{E}_\mathcal{P}^{(k,1)}}, \dots, \prod_{k=1}^{a}
  \hat{\alpha}_{(k)}^{\overline{E}_\mathcal{Y}^{(k,1)}+  
  \overline{E}_\mathcal{P}^{(k,h_\mathcal{P})}}, \dots, \\
  \prod_{k=1}^{a}
  \hat{\alpha}_{(k)}^{\overline{E}_\mathcal{Y}^{(k,\gamma)}+  \overline{E}_\mathcal{P}^{(k,1)}}, \dots,\prod_{k=1}^{a}
  \hat{\alpha}_{(k)}^{\overline{E}_\mathcal{Y}^{(k,\gamma)}+  \overline{E}_\mathcal{P}^{(k,h_\mathcal{P})}}\bigg],
\end{multline*}
 which results in $E_\mathcal{YP}^{([\gamma+h_\mathcal{P}+1:\gamma+h_\mathcal{P}+\gamma h_\mathcal{P}])}$ and $G_{f}$ as follows:
 \begin{align*}
    &E_\mathcal{YP}^{([\gamma+h_\mathcal{P} + 1 : \gamma+h_\mathcal{P} + \gamma h_\mathcal{P}])}= \bigg[  
    \Big[\overline{E}_{\mathcal{Y}}^{(\cdot,1)} + \overline{E}_{\mathcal{P}}^{(\cdot,1)}\Big],
    \dots, 
    \Big[\overline{E}_{\mathcal{Y}}^{(\cdot,1)} \\
    & + \overline{E}_{\mathcal{P}}^{(\cdot,h_\mathcal{P})}\Big], 
    \dots, \Big[\overline{E}_{\mathcal{Y}}^{(\cdot,\gamma)} + \overline{E}_{\mathcal{P}}^{(\cdot,1)}\Big],\dots, \Big[\overline{E}_{\mathcal{Y}}^{(\cdot,\gamma)}+\overline{E}_{\mathcal{P}}^{(\cdot,h_\mathcal{P})}\Big] \bigg] \\
    &G_{f} = \bigg[ G^{(1)}_{\mathcal{Y}} G_{\mathcal{P}}^{(\cdot,1)},{...}, G^{(1)}_{\mathcal{Y}} G_{\mathcal{P}}^{(\cdot,h_\mathcal{P})},{...},G^{(\gamma)}_{\mathcal{Y}} G_{\mathcal{P}}^{(\cdot,h_\mathcal{P})} \bigg].
    \end{align*}
    Secondly, we find the constraints on $\hat{\alpha}_{([1:a])}$ in \eqref{alphadetail}. For  $Y\in \mathcal{Y}$ and  $p \in \mathcal{P}$, $Yp$ should satisfy the constraints simultaneously.
    From \eqref{mergetwoexact}, we have
    \begin{multline}
    \sum_{i=1}^{\gamma} \bigg( \prod_{k=1}^{\gamma}
(\alpha_{\mathcal{Y}}^{(k)})^{R_\mathcal{Y}^{(k,i)}} \bigg) \text{vec}(A_{\mathcal{Y}}^{(i)}) =\\ 
    \sum_{i=1}^{\gamma} \bigg( \prod_{k=1}^{a}
    \hat{\alpha}_{(k)}^{\overline{R}_\mathcal{Y}^{(k,i)}} \bigg) \text{vec}(A_{\mathcal{Y}}^{(i)})=
    \text{vec}(B_{\mathcal{Y}})
    \label{constraint1}
    \end{multline} 
    and
    \begin{multline}
    \sum_{i=1}^{q_\mathcal{P}} \bigg( \prod_{k=1}^{h_\mathcal{P}}
(\alpha_{\mathcal{P}}^{(k)})^{R_\mathcal{P}^{(k,i)}} \bigg)A_{\mathcal{P}}^{{(\cdot,i)}} = \\
    \sum_{i=1}^{q_\mathcal{P}} \bigg( \prod_{k=1}^{a}
    \hat{\alpha}_{(k)}^{\overline{R}_\mathcal{P}^{(k,i)}} \bigg)A_{\mathcal{P}}^{{(\cdot,i)}}=b_{\mathcal{P}}.
    \label{constraint2}
    \end{multline}
    Combining \eqref{constraint1} and \eqref{constraint2}, the following holds
    \begin{multline}
    \sum_{i=1}^{\gamma} \bigg( \prod_{k=1}^{a}
    \hat{\alpha}_{(k)}^{\overline{R}_\mathcal{Y}^{(k,i)}} \bigg)\begin{bmatrix} \text{vec}(A_{\mathcal{Y}}^{{(\cdot,i)}}) \\  0_{m_\mathcal{P}\times1}\end{bmatrix}\\
    +\sum_{i=1}^{q_\mathcal{P}} \bigg( \prod_{k=1}^{a}
    \hat{\alpha}_{(k)}^{\overline{R}_\mathcal{P}^{(k,i)}} \bigg)\begin{bmatrix} 0_{n_{c}n_{a}\times1} \\  A_{\mathcal{P}}^{{(\cdot,i)}}\end{bmatrix}=\begin{bmatrix} \text{vec}(B_{\mathcal{Y}}) \\  b_{\mathcal{P}}\end{bmatrix}
    \label{constraint3} 
    \end{multline}
    which results in $A_{\mathcal{Y}\mathcal{P}}$ and $B_{\mathcal{Y}\mathcal{P}}$. 
    Thus, $Yp \in \hat{\mathcal{P}}$ and therefore $\mathcal{Y} \mathcal{P} \subseteq \hat{\mathcal{P}}$. 
    Conversely, let $\hat{p} \in \hat{\mathcal{P}}$, then 
    \begin{align*}
    \exists \hat{\alpha}_{([1:{\gamma+h_\mathcal{P}+\gamma h_\mathcal{P}}])}\!\!&:  \hat p =\nonumber\hat c + \sum_{i=1}^{\gamma+h_\mathcal{P}+\gamma h_\mathcal{P}}\bigg( \prod_{k=1}^{a}
    \hat{\alpha}_{(k)}^{{E}_\mathcal{YP}^{(k,i)}} \bigg)  G_{f}^{(i)}.
    \end{align*}
    By partitioning
    \begin{multline*}
    \hat{\alpha}_{([1:{\gamma+h_\mathcal{P}+\gamma h_\mathcal{P}}])}=\bigg[\prod_{k=1}^{a}
    \hat{\alpha}_{(k)}^{\overline{E}_\mathcal{Y}^{(k,[1:\gamma])}},\prod_{k=1}^{a}
    \hat{\alpha}_{(k)}^{\overline{E}_\mathcal{P}^{(k,[1:h_\mathcal{P}])}},\\
    \prod_{k=1}^{a}
    \hat{\alpha}_{(k)}^{\overline{E}_\mathcal{Y}^{(k,1)}+  
    \overline{E}_\mathcal{P}^{(k,1)}}, \dots,\prod_{k=1}^{a}
    \hat{\alpha}_{(k)}^{\overline{E}_\mathcal{Y}^{(k,1)}+  
    \overline{E}_\mathcal{P}^{(k,h_\mathcal{P})}},\dots,\\
    \prod_{k=1}^{a}
    \hat{\alpha}_{(k)}^{\overline{E}_\mathcal{Y}^{(k,\gamma)}+  \overline{E}_\mathcal{P}^{(k,1)}}, \dots,\prod_{k=1}^{a}
    \hat{\alpha}_{(k)}^{\overline{E}_\mathcal{Y}^{(k,\gamma)}+  \overline{E}_\mathcal{P}^{(k,h_\mathcal{P})}}\bigg]  
    \label{eq:beta_c1_3}
    \end{multline*}
    it follows that there exist $Y\in \mathcal{Y}$ and $p \in \mathcal{P}$ such that $\hat{p} = Yp$.
    Meanwhile, since $\hat{p}\in\hat{\mathcal{P}}$, it holds that
    \begin{align}
    \sum_{i=1}^{q_\mathcal{P}+\gamma} \bigg( \prod_{k=1}^{a}
    \hat{\alpha}_{(k)}^{{R}_\mathcal{YP}^{(k,i)}} \bigg)A_{\mathcal{YP}}^{{(\cdot,i)}}=\begin{bmatrix} \text{vec}(B_{\mathcal{Y}}) \\  b_{\mathcal{P}}\end{bmatrix}
    \end{align}
    which satisfies the constraints in \eqref{constraint3}. Therefore, $\hat{p} \in \mathcal{Y} \mathcal{P}$ and thus $ \hat{\mathcal{P}} \subseteq \mathcal{Y}   \mathcal{P}$.
\end{proof}

As noted before, $\mathcal{N}=\zono{C_{\mathcal{N}}, G_{\mathcal{N}}, A_{\mathcal{N}}, B_{\mathcal{N}}, \id_{\mathcal{N}}}_\text{CMZ}$ is a special case of CPMZ and thus can be reformulated as $\mathcal{Y}=\zono{C_{\mathcal{N}}, G_{\mathcal{N}}, E_{\mathcal{N}}, A_{\mathcal{N}}, B_{\mathcal{N}}, R_{\mathcal{N}}, \id_{\mathcal{N}}}_\text{CPMZ}$ with $E_{\mathcal{N}} = R_{\mathcal{N}} = I_{\gamma_{\mathcal{N}}}$. 
Then, according to Proposition~\ref{prop:multi}, the CPMZ is multiplied exactly with a CPZ to yield a new CPZ, as illustrated by Fig.~\ref{fig:proof-exact-mult-cartoon}.

\begin{figure}[!h]
    \centering
    \includegraphics[width=0.8\linewidth]{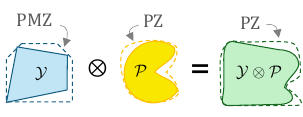}
    \caption{Illustration of the proof of Proposition \ref{prop:multi}, where the dashed lines indicate the boundary of PMZ and PZs, respectively. Let $\textrm{c}(\mathcal{Y})$ and $\textrm{c}(\mathcal{P})$ denote the constraints associated with CPMZ $\mathcal{Y}$ and CPZ $\mathcal{P}$, respectively. The exact multiplication of $\mathcal{Y}:=\{\textrm{PMZ}:\textrm{c}(\mathcal{Y}) \textrm{ hold}\}$ with $\mathcal{P}:=\{\textrm{PZ}:\textrm{c}(\mathcal{P}) \textrm{ hold}\}$ is a CPZ $\mathcal{Y}\otimes \mathcal{P}=\{\textrm{PZ}:\textrm{c}(\mathcal{Y}\otimes \mathcal{P}) \textrm{ hold}\}$, with constraints $\textrm{c}(\mathcal{Y}\otimes \mathcal{P}):= \textrm{Proj}_{\R^{n}} c(\mathcal{Y})$ and $c(\mathcal{P})$, where $\textrm{Proj}_{\R^{n}} c(\mathcal{Y})$ is the projection of $c(\mathcal{Y})$ from $\R^{n_x\times n}$ to $\R^{n}$.}
    \label{fig:proof-exact-mult-cartoon}
\end{figure}

\Comp Computation of $\text{vec}(B_\mathcal{Y})$, $A_{\mathcal{Y}\mathcal{P}}$, $E_{\mathcal{Y}\mathcal{P}}$, $C_\mathcal{Y} c_\mathcal{P}$, $G_\mathcal{Y} c_\mathcal{P}$, $C_\mathcal{Y} G_\mathcal{P}$, and $G_f$ has complexity $\mathcal{O}(n_c n_a)$, $\mathcal{O}(n_c n_a \gamma)$, $\mathcal{O}((|\mathcal{H}| + \gamma)\gamma h_\mathcal{P})$, $\mathcal{O}(n_x n)$, $\mathcal{O}(\gamma n_x n)$, $\mathcal{O}(n_x n h_\mathcal{P})$, and $\mathcal{O}(\gamma n_x n h_\mathcal{P})$, respectively. Then, we obtain the overall complexity of computing exact multiplication \eqref{eq:matczono} to be
$\mathcal{O}\left( 
  n_c n_a \gamma + 
  (|\mathcal{H}| + \gamma) \gamma h_\mathcal{P} +
  \gamma n_x n h_\mathcal{P}
\right)$.
\hfill $\diamond$

\begin{algorithm}[h]
\caption{Data-driven Non-Convex Reachability Analysis}
\label{alg:offline-online}
\textbf{Input:} 
Initial input-state data $D_0 = (X_0^+, X_0^-, U_0^-)$, initial set $\mathcal{X}_0$, noise zonotope $\mathcal{Z}_w$, input zonotopes $\mathcal{U}_k$; \\
\textbf{Output:} 
Reachable sets $\hat{\mathcal{R}}_k$ for $k \in \N$.

\vspace{0.5em}
\noindent\textbf{Offline: Initialization}
\begin{algorithmic}[1]
  \State $\hat{\mathcal{R}}_0 \gets \mathcal{X}_0$
  \State $\mathcal{M}_0^\Sigma \gets ( X_0^+ - \mathcal{M}_w ) \begin{bmatrix} X_0^- \\ U_0^- \end{bmatrix}^\dagger$
 \State Set $\hat{\mathcal{M}}^\Sigma \gets \mathcal{M}_0^\Sigma$
\State Set $X^+ \gets [~]$, $X^- \gets [~]$, $U^- \gets [~]$
\end{algorithmic}

\vspace{0.5em}
\noindent\textbf{Online: Set Refinement and Reachability Analysis }
\begin{algorithmic}[1]

\While{$k\geq 1$}
    \State Construct online data matrices $X^+ = [X^+ ~ x_{(k)}]$, $X^- = [X^- ~ x_{(k-1)}]$, $U^- =[U^- ~ u_{(k-1)}]$
    
    \If{$\textrm{rank}\left(\begin{bmatrix}X^- \\ U^-\end{bmatrix}\right)=n_x + n_u$}  
        \State $\mathcal{M}^\Sigma \gets ( X^+ - \mathcal{M}_w ) \begin{bmatrix} X^- \\ U^- \end{bmatrix}^\dagger$
    \State $\hat{\mathcal{M}}^\Sigma \gets \mathcal{M}^\Sigma \cap \hat{\mathcal{M}}^\Sigma$
    \State $X^+ \gets [~]$, $X^- \gets [~]$, $U^- \gets [~]$
    
\EndIf
\State $\hat{\mathcal{R}}_{k+1} \gets \hat{\mathcal{M}}^\Sigma\otimes ( \hat{\mathcal{R}}_k \times \mathcal{U}_k) \boxplus \mathcal{Z}_w$

\State $k \gets k+1$.
\EndWhile
  
\end{algorithmic}
\end{algorithm}

\begin{figure*}[t]
    \vspace{-1em}
    \centering
    \begin{subfigure}[h]{0.32\textwidth}
        \includegraphics[scale=0.26]{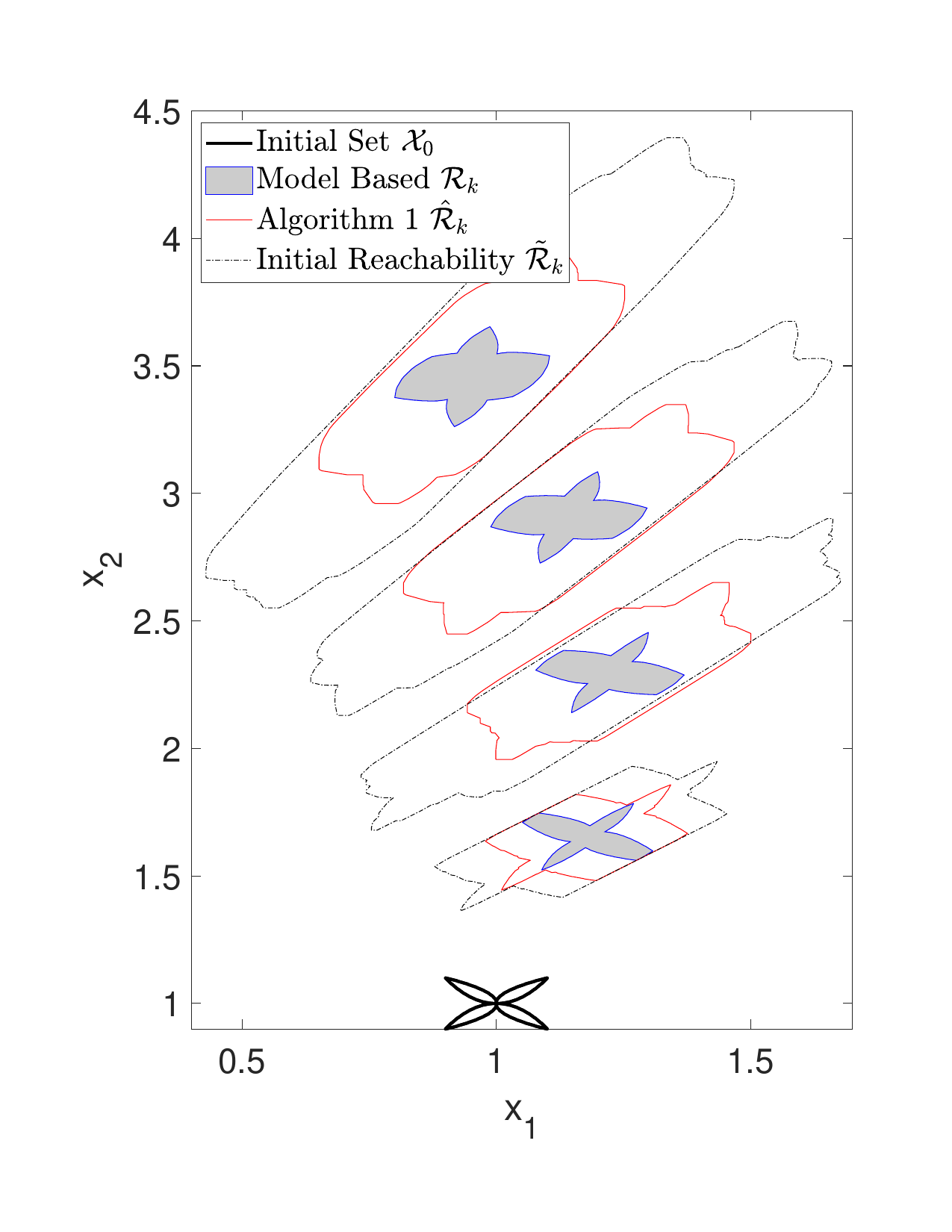}
        \caption{}
        \label{fig:x1x2_a}
    \end{subfigure}
    \begin{subfigure}[h]{0.32\textwidth}
        \includegraphics[scale=0.26]{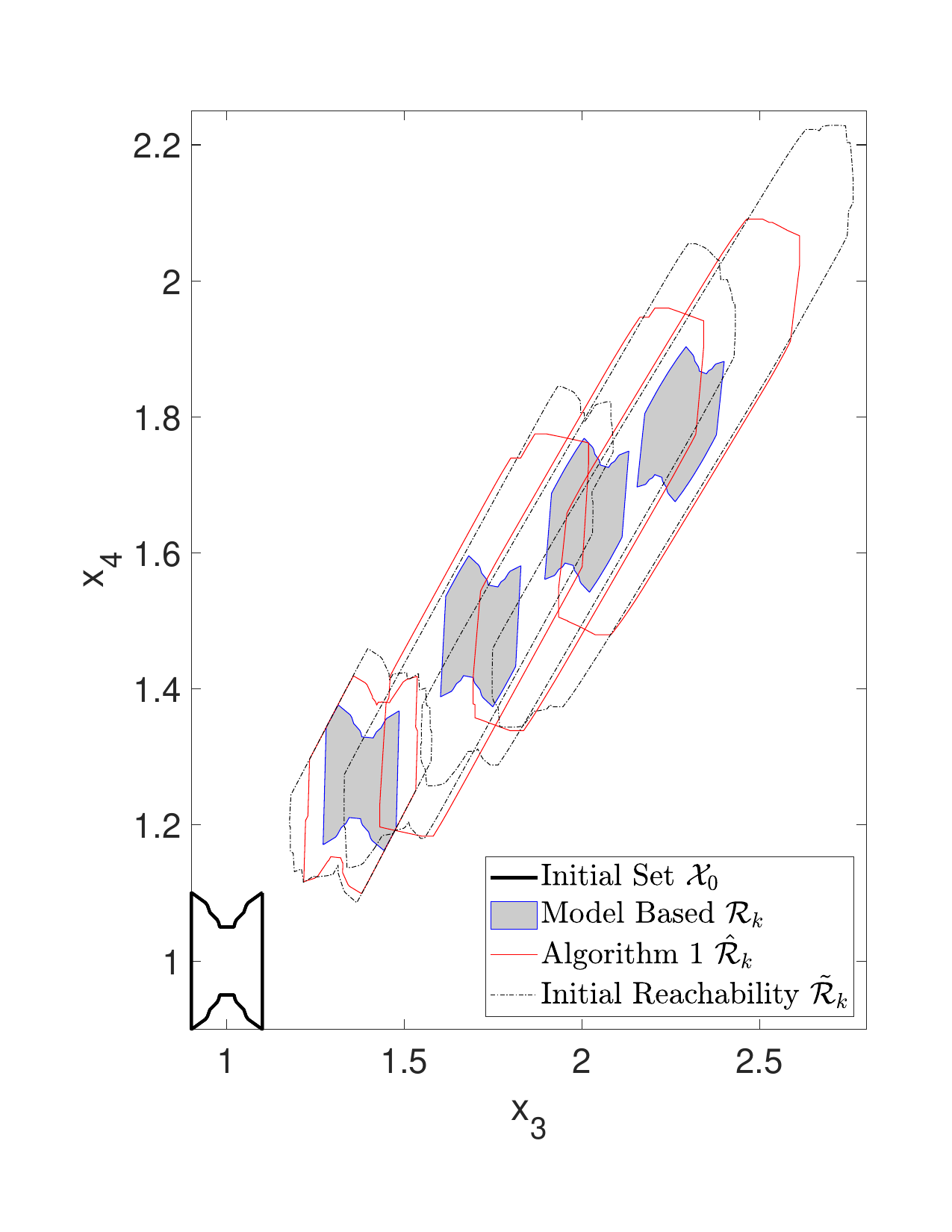}
        \caption{}
        \label{fig:x3x4_a}
    \end{subfigure}
    \begin{subfigure}[h]{0.32\textwidth}
        \includegraphics[scale=0.26]{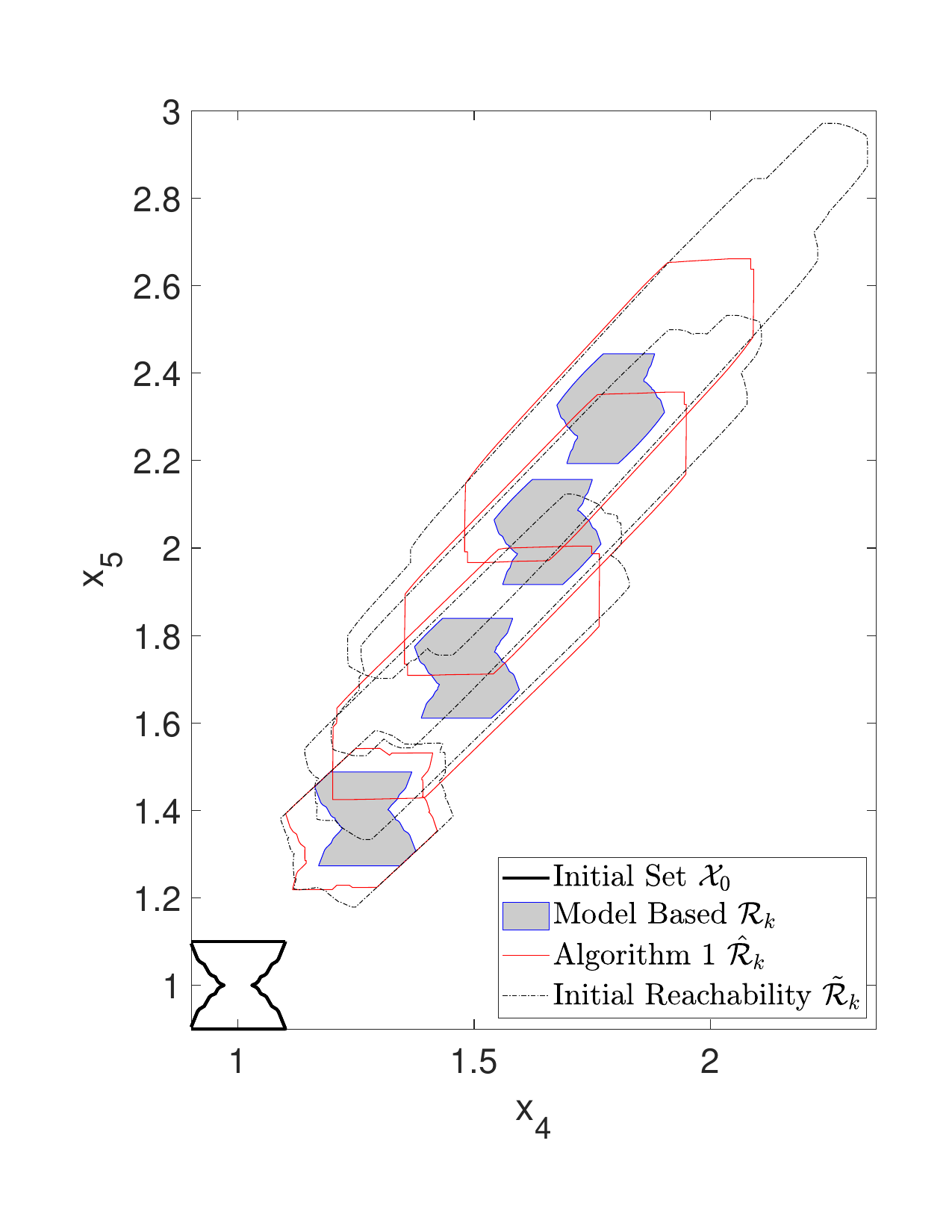}
        \caption{}
        \label{fig:x4x5_a}
    \end{subfigure}
    \caption{Projection of Reachable Sets Computed from Input-State Data Using Algorithm \ref{alg:offline-online} with a Non-Convex Initial Set}
    \label{fig:projSetA}
\end{figure*}
\begin{figure*}[!htbp]
    \centering
    \begin{subfigure}[h]{0.32\textwidth}
        \includegraphics[scale=0.26]{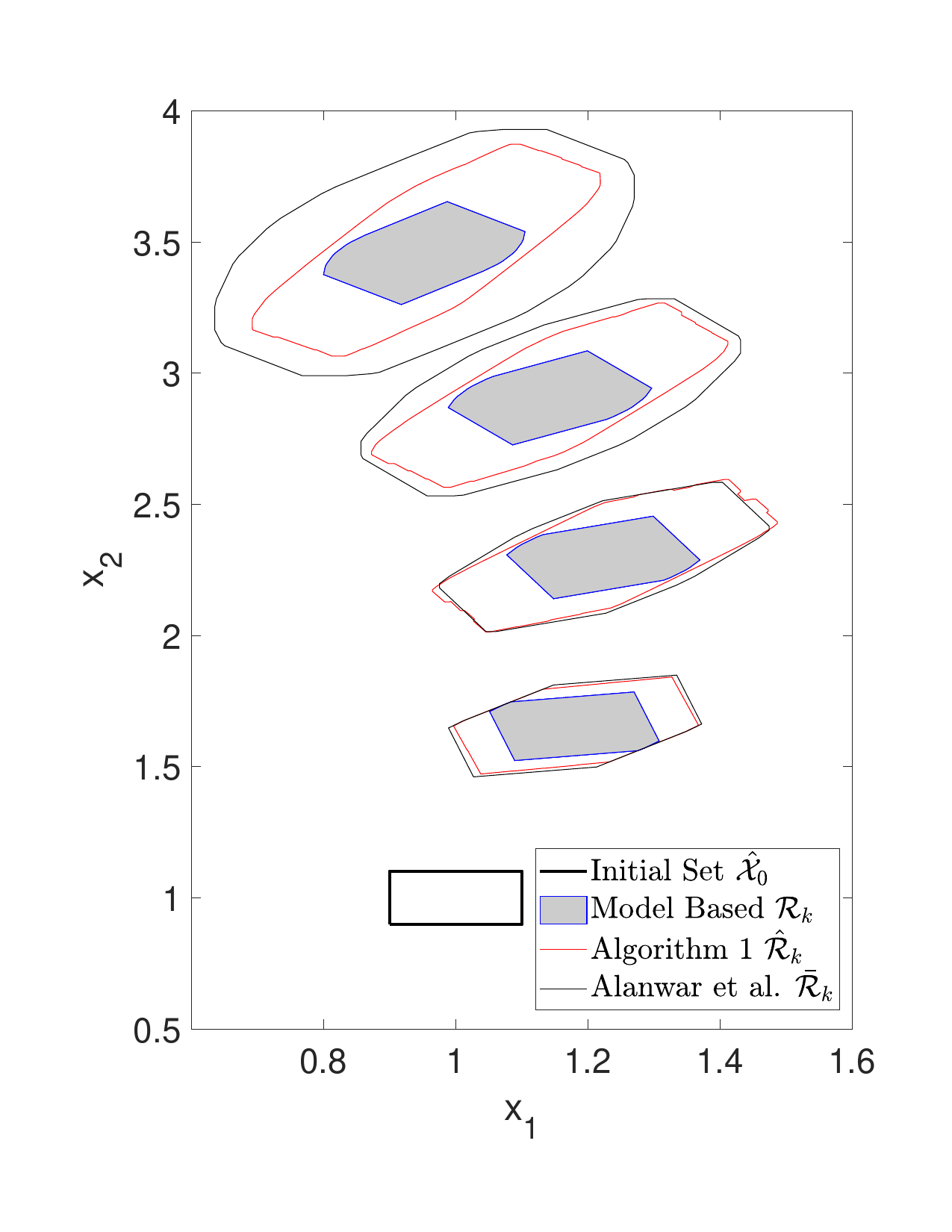}
        \caption{}
        \label{fig:x1x2_b}
    \end{subfigure}
    \begin{subfigure}[h]{0.32\textwidth}
        \includegraphics[scale=0.26]{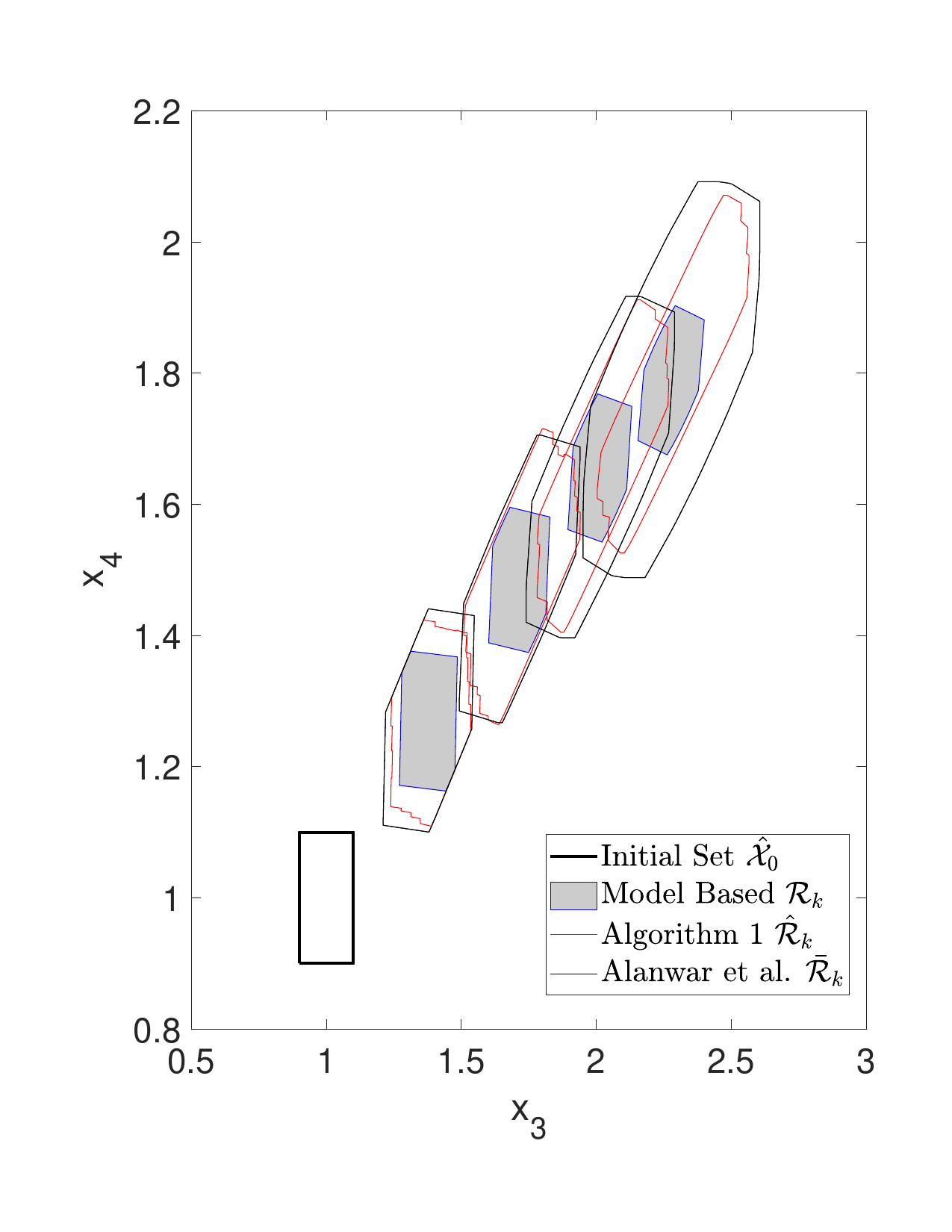}
        \caption{}
        \label{fig:x3x4_b}
    \end{subfigure}
    \begin{subfigure}[h]{0.32\textwidth}
        \includegraphics[scale=0.26]{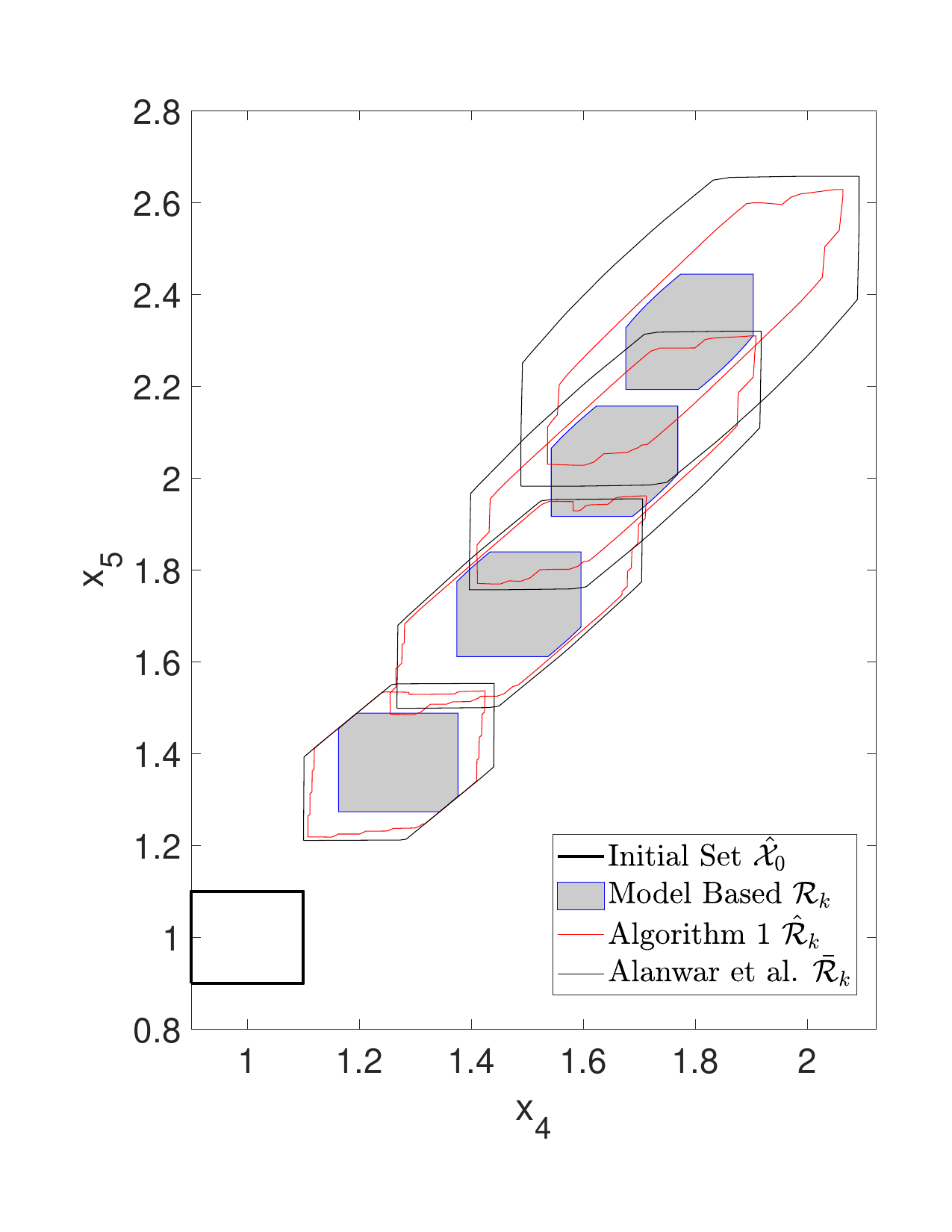}
        \caption{}
        \label{fig:x4x5_b}
    \end{subfigure}
    \caption{Comparative Projection of Reachable Sets from Input-State Data Using Algorithm \ref{alg:offline-online} and Alanwar et al.~\cite{Alanwar2023Datadriven}}
    \label{fig:projSetB}
\end{figure*}

\section{Numerical simulations}\label{sec:numerical-simulations}


Consider a five-dimensional system which is a discretization of the system used in \cite{Alanwar2023Datadriven} with sampling time $0.05$ sec.  All used code to reproduce our results are publicly available\footnotemark. 
The system has the following matrices.
\footnotetext{\href{https://github.com/TUM-CPS-HN/Data-Driven-Nonconvex-Reachability-Analysis}{https://github.com/TUM-CPS-HN/Data-Driven-Nonconvex-Reachability-Analysis}}
\begin{align*}
\Phi_\tr&=\begin{bmatrix}
    0.9323 &  -0.1890   &      0   &      0   &      0 \\
    0.1890 &   0.9323  &       0  &       0   &      0 \\
         0 &        0  &  0.8596  &   0.0430  &        0 \\
         0 &         0   & -0.0430    & 0.8596      &    0 \\
         0 &         0  &        0    &      0   &  0.9048
\end{bmatrix},\\
\Gamma_\tr&=\begin{bmatrix} 
    0.0436&
    0.0533&
    0.0475&
    0.0453&
    0.0476
    \end{bmatrix}^\top.
\end{align*}

In the first experiment, the initial non-convex set is chosen to be $\mathcal{X}_0 = \zono{c_0,G_0,E_0,[~],[~],[~],\id_0}_\mathcal{P}$ , where
\begin{align*}
\begin{split}
c_0=1_{5 \times1},~ G_0=0.1I_5,~ E_0&=\begin{bmatrix}
    2 &  1   &      0   &      0   &      0 \\
    1 &   2  &       0  &       0   &      0 \\
         0 &        0  &  2  &   1  &        0 \\
         0 &         0   & 1    & 2      &    1 \\
         0 &         0  &        0    &      1   &  2
\end{bmatrix}.
\end{split}
\end{align*}
For the second experiment, a convex initial set $\hat{\mathcal{X}}_0=\zono{1_{5 \times1},0.1 I_5}$ is used to compare reachable sets computed via Algorithm~\ref{alg:offline-online} and the method of Alanwar et al.~\cite{Alanwar2023Datadriven}. In both experiments, the input set is defined as $\mathcal{U}_k = \zono{10, 0.25}$, and random noise is sampled from the zonotope $\mathcal{Z}_w = \zono{0, [0.005 , \dots , 0.005]^\top}$. 


Firstly, the reachable sets obtained from a non-convex initial set using both the nominal system model and Algorithm~\ref{alg:offline-online} are illustrated in Fig.~\ref{fig:projSetA}. The reachable set $\tilde{\mathcal{R}}_k$ corresponds to the one computed using the model set $\mathcal{M}_0^\Sigma$, without applying any set refinement. $\hat{\mathcal{R}}_k$ represents the reachable set computed after performing the first online set refinement using new incoming data through Algorithm~\ref{alg:offline-online}. Owing to the exact set multiplication employed in the algorithm, the non-convexity of the initial set is preserved throughout the reachability propagation process. Furthermore, Proposition~\ref{prop:multi} guarantees that the initial reachability computation remains applicable to non-convex initial sets by employing exact multiplication, thus addressing the limitation of the method proposed by Alanwar et al.~\cite{Alanwar2023Datadriven}, which is confined to convex cases.

Secondly, to evaluate the performance of the proposed algorithm, a comparative analysis was conducted under two scenarios. First, trajectory data collected over the interval $[0, T_1]$ with $N_1$ trajectories were used to perform the offline initialization step of Algorithm~\ref{alg:offline-online}. At time $T_1$, new data over $[T_1, T_2]$ containing $N_2$ trajectories became available, triggering the online set refinement step of Algorithm~\ref{alg:offline-online} and yielding the updated reachable set $\hat{\mathcal{R}}_k$. For consistency and fair comparison with the method of Alanwar et al.~\cite{Alanwar2023Datadriven}, we set $N_1 = N_2$. The reachable set $\bar{\mathcal{R}}_k$ computed using the method of Alanwar et al.~\cite{Alanwar2023Datadriven} is based on the combined dataset $D_a$ over the full interval $[0, T_2]$.

As illustrated in Fig.~\ref{fig:projSetB}, the reachable set $\hat{\mathcal{R}}_k$ computed using the Algorithm~\ref{alg:offline-online} is less conservative than $\bar{\mathcal{R}}_k$ obtained using the method of Alanwar et al.~\cite{Alanwar2023Datadriven}. This improvement stems from the incremental refinement of the model set, where newly acquired data are used to further constrain the previously computed set of models. As a result, a more accurate representation of system behavior is achieved in the form of a CMZ, rather than a MZ.
The use of exact set multiplication between CMZ and CPZ in Algorithm~\ref{alg:offline-online} ensures exact set propagation, thereby further reducing conservativeness.
These results highlight the advantages of the proposed incremental approach over direct one-shot data-driven methods, demonstrating its improved ability to capture system dynamics and perform set propagation.

\section{Conclusion} \label{sec:conclusion}

We presented a data-driven framework for reachability analysis using non-convex set representations. 
We introduced a set refinement strategy that incrementally incorporates newly collected input-state data, enabling iterative model set updates. 
This refinement process is supported by an intersection operation for CMZs, allowing system dynamics to be represented more accurately over time. 
Furthermore, we introduced the CPMZ as a new set representation that enables exact multiplication with CPZs. 
Since CMZs are a special case of CPMZs, the proposed method seamlessly applies to the exact multiplication between CMZs and CPZs. 
This enables the exact propagation of non-convex reachable sets by avoiding the over-approximation errors inherent in the non-exact arithmetic of zonotope operations.
The proposed approach thus achieves improved accuracy and reduced conservativeness in safety verification.
Future work will focus on extending the framework to systems with nonlinear dynamics.

\bibliographystyle{IEEEtran}
\bibliography{ref}

\end{document}